\newtheorem{thm}{Theorem}
\newtheorem{lemma}[thm]{Lemma}
\newtheorem{prop}[thm]{Proposition}
\theoremstyle{definition}
\newcommand{\R}{\mathbb{R}}
\renewcommand{\P}{\mathbb{P}}
\newcommand{\E}{\mathbb{E}}
\newcommand{\N}{\mathbb{N}}
\newcommand{\ind}[1]{\mathbbm{1}_{#1}}
\DeclareMathOperator{\tr}{Tr}
\author{David Buzinski}
\author{Elizabeth Meckes}
\address{Department of Mathematics, Case Western Reserve University,
10900 Euclid Ave., Cleveland, Ohio 44106, U.S.A.}
\email{dab197@case.edu}
\address{Department of Mathematics, Case Western Reserve University,
10900 Euclid Ave., Cleveland, Ohio 44106, U.S.A.}
\email{elizabeth.meckes@case.edu}
\title{Almost sure convergence in quantum spin glasses}
\begin{document}

\begin{abstract}
 Recently, Keating, Linden, and Wells \cite{KLW} showed that the density of states
 measure of a nearest-neighbor quantum spin glass model is
 approximately Gaussian when the number of particles is large.  The
 density of states measure is the ensemble average of the empirical
 spectral measure of a random matrix; in this paper, we use
 concentration of measure and entropy techniques together with the
 result of \cite{KLW} to show that in fact,
 the empirical spectral measure of such a random matrix is almost
 surely approximately Gaussian itself, with no ensemble averaging.  We
 also extend this result to a spherical quantum spin glass model and
to the more general coupling geometries investigated by Erd\H{o}s and Schr\"oder.
\end{abstract}

\maketitle


\section{Introduction and statements of results}

In the recent paper \cite{KLW}, Keating, Linden and Wells show that the density of states measure of a quantum spin glass with nearest neighbor interactions and Gaussian coupling coefficients is approximately Gaussian, as the number of particles tends to infinity.  More specifically, they considered the following random matrix model for the Hamiltonian of a quantum spin glass: let $\{Z_{a,b,j}\}_{\substack{1\le a,b\le 3\\1\le j\le n}}$ be independent standard Gaussian random variables, and define the $2^n\times 2^n$ random matrix $H_n$ by
\begin{equation}\label{D:H_n}H_n:=\frac{1}{\sqrt{9n}}\sum_{j=1}^n\sum_{a,b=1}^3Z_{a,b,j}\sigma^{(a)}_j\sigma^{(b)}_{j+1},\end{equation}
where for $1\le a\le 3$,
\[\sigma^{(a)}_j:=I_n^{\otimes(j-1)}\otimes\sigma^{(a)}\otimes I_2^{\otimes(n-j)},\]
with $I_2$ denoting the $2\times 2$ identity matrix, $\sigma^{(a)}$ denoting the $2\times 2$ non-trivial Pauli matrices
\[\sigma^{(1)}:=\begin{bmatrix}0&1\\1&0\end{bmatrix}\qquad \sigma^{(2)}:=\begin{bmatrix}0&-i\\i&0\end{bmatrix}\qquad \sigma^{(3)}:=\begin{bmatrix}1&0\\0&-1\end{bmatrix},\]
and the labeling cyclic so that $\sigma^{(b)}_{n+1}:=\sigma^{(b)}_{1}.$

The density of states measure $\mu_n^{DOS}$ for the system is the ensemble average of the spectral measure of $H_n$; that is, if $\{\lambda_j\}_{1\le j\le 2^n}$ are the (necessarily real) eigenvalues of $H_n$, then for $A\subseteq \R$,
\[\mu_n^{DOS}(A)=\frac{1}{2^n}\E\big|\big\{j:\lambda_j\in A\big\}\big|.\]
In other words, $\mu_n^{DOS}(A)$ is the expected proportion of the eigenvalues of $H_n$ lying in the set $A$.  The main result of \cite{KLW} is that $\mu_n^{DOS}$ converges weakly to Gaussian, as $n\to\infty$.  The authors go on to consider more general collections of (still independent) coupling coefficients, and more general coupling geometries than that of nearest-neighbor interactions.  In more recent work, Erd\H{os} and Schr\"oder \cite{ES} have considered still more general coupling geometries, and found a sharp transition in the limiting behavior of the density of states measure depending on the size of the maximum degree of the underlying graph, relative to its number of edges.

The purpose of this paper is to move from convergence in expectation of the spectral measure of $H_n$ to the considerably stronger notion of almost sure convergece.  As observed in \cite{KLW}, the extent to which ensemble averages actually manage to describe features of individual systems is not always clear, but is a crucial issue if one is to make meaningful use of random matrix models.  
The following result shows that the Gaussian behavior exhibited by the average spectral measure is indeed the typical behavior of the empirical spectral measures.

\begin{thm}\label{T:random-dist-to-gauss}
Let $\mu_n$ be the spectral measure of $H_n$ and let $\gamma$ denote the standard Gaussian distribution.  There are universal constants $C$, $C'$ and $c$ such that
\begin{enumerate}[label=(\alph*)]
\item \label{P:exp-dist-to-gauss}\(\displaystyle\E d_{BL}(\mu_n,\gamma)\le \frac{C}{n^{1/6}};\)
\item \label{P:dist-conc-gauss}\(\displaystyle\P\left[d_{BL}(\mu_n,\gamma)\ge\frac{C}{n^{1/6}}+t\right]\le Ce^{-cnt^2};\)
\end{enumerate}
and
\begin{enumerate}[resume,label=(\alph*)]
\item \label{P:as-dist-to-gauss}with probability 1, for all sufficiently large $n$,
\[d_{BL}(\mu_n,\gamma)\le \frac{C'}{n^{1/6}}.\]
\end{enumerate}
Here $d_{BL}(\mu,\nu)$ denotes the bounded-Lipschitz distance between probability measures, which metrizes the topology of weak convergence.
\end{thm}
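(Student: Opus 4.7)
The plan is to split the problem via the triangle inequality,
\[
d_{BL}(\mu_n,\gamma)\le d_{BL}(\mu_n,\mu_n^{DOS})+d_{BL}(\mu_n^{DOS},\gamma),
\]
and handle the two pieces separately. The second, deterministic, piece is controlled by the main result of \cite{KLW}: their pointwise bound on the difference of characteristic functions $|\hat\mu_n^{DOS}(\xi)-e^{-\xi^2/2}|$ converts, via a standard Fourier smoothing argument (convolve a bounded $1$-Lipschitz test function with a Gaussian mollifier at scale $\delta$, apply the characteristic function estimate on $|\xi|\le T$, and optimize $T$ and $\delta$), into the bound $d_{BL}(\mu_n^{DOS},\gamma)\le C n^{-1/6}$. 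All remaining work is to control the random quantity $F(Z):=d_{BL}(\mu_n,\mu_n^{DOS})$.

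For the concentration claim (b), I would show that $F$ is a Lipschitz function of the underlying Gaussian vector $Z=(Z_{a,b,j})\in\R^{9n}$ and apply Gaussian concentration. Since $d_{BL}\le W_1\le W_2$ on probability measures with finite second moment, $F$ is $1$-Lipschitz in $\mu_n$ with respect to the $W_2$-Kantorovich metric; by the Hoffman--Wielandt inequality,
\[
W_2^2(\mu_n(Z),\mu_n(Z'))\le 2^{-n}\|H_n(Z)-H_n(Z')\|_{HS}^2.
\]
The map $Z\mapsto H_n(Z)$ is linear, and the operators $\sigma_j^{(a)}\sigma_{j+1}^{(b)}$ are pairwise Hilbert--Schmidt orthogonal---for distinct triples $(j,a,b)$ and $(j',a',b')$, some site carries either a Pauli matrix against an identity or two mismatched Paulis, and $\tr\sigma^{(a)}=0=\tr(\sigma^{(a)}\sigma^{(a')})$ for $a\neq a'$---while each $\sigma_j^{(a)}\sigma_{j+1}^{(b)}$ has squared HS norm $2^n$. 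A direct calculation then gives $\|H_n(Z)-H_n(Z')\|_{HS}^2=\frac{2^n}{9n}\|Z-Z'\|_2^2$, so $F$ is $(3\sqrt{n})^{-1}$-Lipschitz and Gaussian concentration yields
\[
\Prob[F\ge \E F+t]\le C\exp(-c n t^2).
\]

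To complete (a) it remains to bound $\E F$. I expect to prove the considerably stronger bound $\E F=O((\log n)^{1/6} n^{-1/3})$, which is then easily absorbed into the KLW contribution. The approach is a Dudley-type entropy argument: for any cutoff $K$, split $|\int f\,d(\mu_n-\mu_n^{DOS})|$ into an integral over $[-K,K]$ and a tail bounded by $\mu_n([-K,K]^c)+\mu_n^{DOS}([-K,K]^c)$, with the $DOS$ tail of size $O(e^{-cK^2})$ by the near-Gaussianity of $\mu_n^{DOS}$ and the other tail having the same expectation; then cover the unit $BL$-ball of functions on $[-K,K]$ by a sup-norm $\eps$-net of cardinality at most $e^{CK/\eps}$; and for each element of the net apply the sub-Gaussian concentration of $\int f\,d\mu_n$ around its mean (with variance proxy $1/(9n)$ from the Lipschitz calculation above). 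The standard bound on the expected supremum of a finite sub-Gaussian family yields $\E F\le C\eps+C\sqrt{K/(n\eps)}+Ce^{-cK^2}$, and choosing $K\sim\sqrt{\log n}$ and $\eps\sim (K/n)^{1/3}$ gives the claimed rate. Combined with the concentration inequality this gives (a) and (b); taking $t=C n^{-1/6}$ in (b) gives a probability at most $C\exp(-c n^{2/3})$, which is summable in $n$, so Borel--Cantelli gives (c).

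The main obstacle is the expectation estimate: one must handle the supremum over an infinite-dimensional class of test functions together with the random support of the spectrum, and the balance between entropy, Lipschitz and tail contributions is the only point where anything beyond bookkeeping is required once the Hilbert--Schmidt orthogonality of the interaction terms and the Lipschitz dependence of the spectrum on the Gaussian coefficients are in hand.
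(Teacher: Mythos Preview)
Your approach is essentially the paper's: the same triangle-inequality split, the same Hoffman--Wielandt/HS-orthogonality Lipschitz computation (the paper gets the identical constant $(3\sqrt n)^{-1}$), Gaussian concentration for (b), a single-scale covering/Dudley bound on $\E d_{BL}(\mu_n,\mu_n^{DOS})$ for (a), and Borel--Cantelli for (c). The paper uses the F\'ejer kernel rather than a Gaussian mollifier in the Fourier step and in fact obtains the sharper $d_{BL}(\mu_n^{DOS},\gamma)\le Cn^{-1/4}$, but otherwise the skeleton is identical.

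The one point worth flagging is the truncation in your entropy argument. The paper truncates at $R\sim\sqrt n$, justified by concentration of $\|H_n\|_{op}$ (which is $1$-Lipschitz in $Z$ with mean $O(\sqrt n)$), and obtains only $\E d_{BL}(\mu_n,\mu_n^{DOS})\le Cn^{-1/6}$. You truncate at $K\sim\sqrt{\log n}$ and assert $\mu_n^{DOS}([-K,K]^c)=O(e^{-cK^2})$ ``by the near-Gaussianity of $\mu_n^{DOS}$''. This is the one place your justification is too thin: KLW's pointwise characteristic-function bound does \emph{not} by itself yield sub-Gaussian tails uniform in $n$. The claim is nevertheless true, via a direct moment computation: expanding $\int x^{2k}\,d\mu_n^{DOS}=(9n)^{-k}\,\E\,2^{-n}\tr\bigl(\sum Z_iM_i\bigr)^{2k}$ by Wick's theorem and bounding each normalized trace of a product of the unitaries $M_i=\sigma_j^{(a)}\sigma_{j+1}^{(b)}$ by $1$ gives $\int x^{2k}\,d\mu_n^{DOS}\le(2k-1)!!$, hence the desired tail. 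With that filled in, your sharper truncation is correct and buys the improved rate $\E d_{BL}(\mu_n,\mu_n^{DOS})=O\bigl((\log n)^{1/6}n^{-1/3}\bigr)$ you claim; the paper's cruder operator-norm truncation avoids this extra moment input at the cost of the rate.
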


In the paper \cite{KLW2}, Keating, Linden and Wells took a different
approach to understanding the behavior of the spectral measures of
individual Hamiltonians; rather than consider a random matrix model,
they took the coefficients in \eqref{D:H_n} to be deterministic,
subject to a normalization and boundedness condition, and showed that
in that case, the non-random spectral measures converged weakly to
Gaussian.  It should be possible to take that result as a starting
point in order to obtain almost sure convergence in the Gaussian
model, although there are various technical challenges.  We instead
take a rather different approach, combining convergence in expectation
with various probabilistic techniques.

The organization of this paper is as follows.   In Section
\ref{S:DOS-to-gamma}, the  pointwise estimate of \cite{KLW} on the difference between characteristic functions of
$\mu_n^{DOS}$ and $\gamma$ is parlayed into an estimate on
$d_{BL}(\mu_n^{DOS},\gamma)$ using Fourier analysis.  In Section
\ref{S:avg-to-avg} the Gaussian concentration of measure phenomenon is
used to show that the random variable $d_{BL}(\mu_n,\gamma)$ is strongly
concentrated at its mean.  Then, the expected distance between $\mu_n$
and its average $\mu_n^{DOS}$ is estimated; this is done using a
combination of applications of Gaussian concentration of measure,
entropy methods, and approximation theory, via a similar approach to
the one taken by the second author and M.\ Meckes in \cite{MM}.  The
almost-sure convergence rate given in part \ref{P:as-dist-to-gauss} is
an immediate consequence of part \ref{P:dist-conc-gauss} and the first
Borel-Cantelli lemma, and is therefore not discussed further.  In
Section \ref{S:sphere}, we consider a modification of the random matrix
model above, in which the coefficients in $H_n$ are not independent
but are drawn uniformly
from the $9n$-dimensional sphere, and show that the empirical spectral
measure is almost surely approximately Gaussian in that setting as
well.  Finally, Section \ref{S:remarks} offers some remarks on extensions of Theorem \ref{T:random-dist-to-gauss} to further related ensembles.

\subsection*{Notation and Conventions}
Let the random matrix $H_n$ be defined as above, with eigenvalues $\lambda_1<\cdots<\lambda_{2^n}$.  The empirical spectral measure $\mu_n$ of $H_n$ is defined by
\[\mu_n:=2^{-n}\sum_{j=1}^{2^n}\delta_{\lambda_j};\]
that is, $\mu_n$ is the random probability measure putting equal mass at each eigenvalue of $H_n$.  Its ensemble average $\E\mu_n$ is denoted $\mu_n^{DOS}$ and is called the density of states measure.

For probability measures $\mu$ and $\nu$ on $\R$, the bounded-Lipschitz distance $d_{BL}(\mu,\nu)$ from $\mu$ to $\nu$ is defined by
\[d_{BL}(\mu,\nu):=\sup\left\{\left|\int fd\mu-\int fd\nu\right|:\|f\|_{BL}\le 1,\right\}\]
where $\|f\|_{BL}$ denotes the bounded-Lipschitz norm of $f$, defined by
\[\|f\|_{BL}:=\|f\|_\infty+|f|_L, \]
with $|f|_L$ denoting the Lipschitz constant of $f$.
The bounded-Lipschitz distance metrizes weak convergence of probability measures.

If the test functions are required only to be Lipschitz and not necessarily bounded, one gets instead the $L_1$-Kantorovich distance $W_1(\mu,\nu)$:
\[W_1 (\mu,\nu):=\sup\left\{\left|\int fd\mu-\int fd\nu\right|:|f|_{L}\le 1\right\}.\]
Clearly, 
\(d_{BL}(\mu,\nu)\le W_1(\mu,\nu).\)

It is the Kantorovich--Rubenstein theorem that $W_1$ is also given by
\[W_1 (\mu,\nu)=\inf_\pi\int|x-y|d\pi(x,y),\]
where the infimum is taken over all couplings $\pi$ of the meausures $\mu$ and $\nu$.  It is for this reason that $W_1$ is also called the $L_1$-coupling distance.

Finally, symbols such as $C,c$ will denote universal constants independent of all parameters, which may change in value from one appearance to the next.
\section{Gaussian density of states in the Kantorovich distance}\label{S:DOS-to-gamma}
The crucial ingredient in the estimation of $d_{BL}(\mu_n^{DOS},\gamma)$ is the following pointwise bound from \cite{KLW} on the difference between the corresponding characteristic functions.
\begin{thm}[Keating, Linden, and Wells]
Let $\mu_n^{DOS}$ and $\gamma$ be as above; denote the characteristic function of $\mu_n^{DOS}$ by $\psi_n$ and the characteristic function of $\gamma$ by $\varphi$.  There is a constant $C$ independent of $n$ such that for all $\xi$,
\begin{equation}\label{E:ft-diff-bound}
\left|\psi_n(\xi)-\varphi(\xi)\right|\le\frac{C\xi^2}{\sqrt{n}}.\end{equation}
\end{thm}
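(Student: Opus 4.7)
The plan is to prove the bound via the moment method, matching moments of $\mu_n^{DOS}$ with those of $\gamma$. Expanding both characteristic functions as Taylor series around $\xi = 0$ gives
\[
\psi_n(\xi) = \sum_{k \ge 0} \frac{(i\xi)^k}{k!} m_k \qquad \text{and} \qquad \varphi(\xi) = \sum_{k \ge 0} \frac{(i\xi)^k}{k!} g_k,
\]
where $m_k := \frac{1}{2^n}\E\tr(H_n^k)$ are the moments of $\mu_n^{DOS}$ and $g_k = (k-1)!!$ for $k$ even (zero for $k$ odd) are the moments of $\gamma$. The goal reduces to showing that $m_k$ is close to $g_k$ with a quantitative error that can be summed against $|\xi|^k/k!$.

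To compute $m_k$, substitute the definition of $H_n$ and apply Isserlis' (Wick's) theorem to the Gaussian expectation, obtaining
\[
m_k = \frac{1}{(9n)^{k/2}\,2^n}\sum_{\pi \in \mathcal{P}_k}\sum_{(\mathbf{a},\mathbf{b},\mathbf{j})\sim\pi}\tr\bigl(\sigma_{j_1}^{(a_1)}\sigma_{j_1+1}^{(b_1)}\cdots \sigma_{j_k}^{(a_k)}\sigma_{j_k+1}^{(b_k)}\bigr),
\]
where $\mathcal{P}_k$ is the set of perfect pairings of $\{1,\ldots,k\}$ (so $m_k = 0$ for $k$ odd) and the inner sum ranges over tuples $(\mathbf{a},\mathbf{b},\mathbf{j})$ whose values on each pair of $\pi$ agree. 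The tensor product structure of the Pauli operators makes the trace factor as a product of single-site traces over the $n$ sites, and each single-site trace vanishes unless the Pauli matrices at that site multiply to $\pm I$.

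The main contribution should come from ``generic'' pairings in which the values of $j$ used by different pairs are all distinct and well-separated, so that the constraints decouple and each pair contributes a factor of $1$ after normalization; there are $(k-1)!!$ such pairings, and summed they reproduce the Gaussian moment $g_k$ at leading order. The remaining ``overlap'' pairings, in which two or more pairs share an index $j$ or use indices differing by $\pm 1$ (since a bond at position $\ell$ involves both sites $j_\ell$ and $j_\ell+1$), contribute the correction. Each such coincidence costs at least one factor of $n^{-1/2}$: the combinatorial count of such index configurations is smaller than the generic count by at least a factor of $n$, while the explicit prefactor only carries $n^{-k/2}$.

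The main obstacle is the careful combinatorial bookkeeping of the overlap configurations, together with verification that their single-site Pauli traces (which can carry signs coming from the anticommutation relations $\sigma^{(a)}\sigma^{(b)} = -\sigma^{(b)}\sigma^{(a)}$ for $a\ne b$) do not conspire to cancel the $n^{-1/2}$ suppression, and that the constants $C_k$ in $|m_k - g_k| \le C_k/\sqrt{n}$ grow at most polynomially in $k$. Given such bounds, summing the Taylor expansion over $k$ yields $|\psi_n(\xi) - \varphi(\xi)| \le C\xi^2/\sqrt{n}$ in a range $|\xi| \lesssim n^{1/4}$; for larger $|\xi|$ the trivial estimate $|\psi_n - \varphi| \le 2$ absorbs into the constant $C$, since there $C\xi^2/\sqrt{n} \ge 2$.
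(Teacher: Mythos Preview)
First, a context note: the paper does not prove this theorem itself; it is quoted as the main input from \cite{KLW}.  The approach of Keating--Linden--Wells, visible in the spherical-model section of the present paper, is \emph{not} a moment computation.  They split $H_n=A+B$ into sums over even and odd bonds (and further by Pauli type), so that the summands inside each $A_k$ or $B_k$ commute; they then compare $\psi_n(t)=\E\bigl[2^{-n}\tr e^{itH_n}\bigr]$ to the auxiliary quantity $\phi_n(t)=\E\bigl[2^{-n}\tr\prod_k e^{itA_k}e^{itB_k}\bigr]$ via a Trotter--type commutator estimate, which produces the factor $t^2/\sqrt{n}$.  The auxiliary $\phi_n$ is computed exactly: since $e^{itx\sigma_j^{(a)}\sigma_{j+1}^{(b)}}=\cos(tx)I+i\sin(tx)\sigma_j^{(a)}\sigma_{j+1}^{(b)}$ and the sine terms vanish by symmetry, one gets $\phi_n(t)=\prod\E\cos(tZ_{a,b,j}/3\sqrt{n})=e^{-t^2/2}$ exactly in the Gaussian case.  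So the $C\xi^2/\sqrt n$ bound comes entirely from the Trotter comparison, not from moment asymptotics.

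Your moment scheme is a reasonable alternative in spirit, but as written it has a gap in the summation step.  Suppose you establish $|m_{2k}-g_{2k}|\le C_k/\sqrt n$.  Even with the optimistic (and in fact unrealistic) hypothesis that $C_k$ is polynomial in $k$, the series
\[
\sum_{k\ge 1}\frac{|\xi|^{2k}}{(2k)!}\,\frac{C_k}{\sqrt n}
\]
grows like $e^{|\xi|}$ times a polynomial in $|\xi|$, so it is bounded by $C\xi^2/\sqrt n$ only for $|\xi|=O(1)$, not for $|\xi|\lesssim n^{1/4}$ as you assert.  More realistically the combinatorics give a \emph{relative} error, $|m_{2k}-g_{2k}|\lesssim (2k-1)!!\cdot k^2/n$, and then the sum behaves like $n^{-1}\xi^4 e^{\xi^2/2}$, which is $\le C\xi^2/\sqrt n$ only for $|\xi|\lesssim\sqrt{\log n}$.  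Either way there is an uncontrolled intermediate range $\sqrt{\log n}\lesssim|\xi|\lesssim n^{1/4}$ where neither your Taylor bound nor the trivial bound $|\psi_n-\varphi|\le 2$ yields $C\xi^2/\sqrt n$.  (Note also that $|m_{2k}-g_{2k}|$ cannot be polynomial in $k$ uniformly in $n$: for fixed $n$ and $k\to\infty$, $m_{2k}$ is governed by $\|H_n\|_{op}^{2k}\sim n^k$ while $g_{2k}\sim(2k/e)^k$.)  To close the argument via moments you would need either a truncated Taylor expansion with an explicit remainder estimate for the tails of both measures, or a different packaging of the moment information; the direct series comparison you outline does not reach the stated range.
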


The main result of this section is the following.
\begin{thm}\label{T:W_1_to_gamma}
Let $\mu_n^{DOS}$ and $\gamma$ be as above.  There is a constant $c$ such that  
\[d_{BL}(\mu_n^{DOS},\gamma)\le\frac{c}{n^{1/4}}.\]
\end{thm}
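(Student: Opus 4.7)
The plan is to convert the pointwise Fourier bound (\ref{E:ft-diff-bound}) into a bound on $d_{BL}$ by a Parseval-type identity combined with a smoothing approximation of the test function.

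The key preliminary observation is that (\ref{E:ft-diff-bound}), combined with the trivial bound $|\psi_n-\varphi|\le 2$, implies the weighted integral estimate
\[
\int_\R \frac{|\psi_n(\xi)-\varphi(\xi)|}{1+\xi^2}\,d\xi\le \frac{c}{n^{1/4}}.
\]
Indeed, splitting the integral at $|\xi|=n^{1/4}$ (the scale where the two bounds coincide), the contribution from $|\xi|\le n^{1/4}$ is bounded by $\int \tfrac{C\xi^2/\sqrt n}{1+\xi^2}\,d\xi\le C\sqrt n^{-1}\cdot 2n^{1/4}$, and the contribution from $|\xi|>n^{1/4}$ is bounded by $\int_{|\xi|>n^{1/4}}\tfrac{2}{\xi^2}\,d\xi=4n^{-1/4}$.

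Next, fix a test function $f$ with $\|f\|_{BL}\le 1$. Since $f$ is not integrable, one cannot directly apply Parseval; instead I would approximate $f$ by a smooth, compactly supported $g$ obtained in two stages: first truncate $f$ by a smooth cutoff to an interval $[-R,R]$ with $R\sim\sqrt{\log n}$, using the subgaussian tail estimates on $\mu_n^{DOS}$ (which follow from $\E\tr(H_n^{2k})/2^n$ matching the Gaussian moments up to lower order) and the Gaussian tails of $\gamma$ to keep the truncation error $O(n^{-c})$. Then mollify by a Gaussian at scale $\sigma$, with mollification error $O(\sigma)$ in $L^\infty$, contributing $O(\sigma)$ to $d_{BL}$. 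The resulting $g\in L^1$ is smooth, and Parseval yields
\[
\left|\int g\,d(\mu_n^{DOS}-\gamma)\right|=\frac{1}{2\pi}\left|\int \hat g(\xi)\overline{(\psi_n(\xi)-\varphi(\xi))}\,d\xi\right|\le \frac{1}{2\pi}\|(1+\xi^2)\hat g\|_\infty\cdot\int\frac{|\psi_n-\varphi|}{1+\xi^2}\,d\xi.
\]
Since $\|(1+\xi^2)\hat g\|_\infty\le \|g\|_{L^1}+\|g''\|_{L^1}$, and these norms can be controlled in terms of $R$ and $\sigma$ via Young's inequality applied to $g = (f\chi_R)*\phi_\sigma$, one gets a bound of the form
\[
d_{BL}(\mu_n^{DOS},\gamma)\le C\bigl(n^{-c}+\sigma+R(1+\sigma^{-\alpha})n^{-1/4}\bigr)
\]
for appropriate $\alpha$, which on balancing $\sigma$ against the Fourier term yields $c/n^{1/4}$.

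The main obstacle is the simultaneous control of $\|g\|_{L^1}$ and $\|g''\|_{L^1}$: the truncation forces $\|g\|_{L^1}$ to grow like $R$, while mollification at scale $\sigma$ makes $\|g''\|_{L^1}$ blow up like $\sigma^{-\alpha}$. Choosing the smoothing kernel and the parameters $(R,\sigma)$ optimally, so that the mollification and truncation errors are absorbed into $n^{-1/4}$, is the heart of the calculation; the rate $n^{-1/4}$ is exactly the critical scale built into the weighted Fourier integral above.
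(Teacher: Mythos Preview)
Your overall strategy---smooth the test function, apply Parseval, and invoke the weighted bound $\int\frac{|\psi_n-\varphi|}{1+\xi^2}\,d\xi\le c\,n^{-1/4}$---is sound in spirit, but the parameter balancing you sketch does \emph{not} yield $n^{-1/4}$. The problem is the factor $\|(1+\xi^2)\hat g\|_\infty\le \|g\|_{L^1}+\|g''\|_{L^1}$. With $g=(f\chi_R)*\phi_\sigma$ and $f$ merely Lipschitz, the best you can do is write $g''=(f\chi_R)'*\phi_\sigma'$ and use $\|(f\chi_R)'\|_{L^1}\le CR$, $\|\phi_\sigma'\|_{L^1}\le C/\sigma$, giving $\|g''\|_{L^1}\le CR/\sigma$ (so $\alpha=1$, not smaller; any approximate identity has $\|\phi_\sigma'\|_{L^1}\sim\sigma^{-1}$). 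Your bound then reads
\[
d_{BL}(\mu_n^{DOS},\gamma)\;\le\; C\Bigl(\text{(trunc.)}+\sigma+\frac{R}{\sigma}\,n^{-1/4}\Bigr),
\]
and optimizing over $\sigma$ gives $\sigma\sim R^{1/2}n^{-1/8}$, hence a rate of order $R^{1/2}n^{-1/8}$, i.e.\ essentially $n^{-1/8}$ even if $R\sim\sqrt{\log n}$. Playing with the weight $1/(1+|\xi|^k)$ only makes matters worse for large $k$; the best rate this framework produces is roughly $n^{-1/6}$, not $n^{-1/4}$. A secondary issue is that the claimed subgaussian tails for $\mu_n^{DOS}$ are not a consequence of~(\ref{E:ft-diff-bound}); the paper only derives (and only needs) the crude bound $\mu_n^{DOS}(|x|>t)\le c/t^2$.

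The paper's argument avoids this loss by using the F\'ejer kernel (compactly supported Fourier transform) and a \emph{single} integration by parts, so that only $|f'|\le 1$ in $L^\infty$ is used rather than any control on $f''$. Concretely, they obtain
\[
\Bigl|\int f_\lambda\,d(\mu_n^{DOS}-\gamma)\Bigr|=\Bigl|\int_{-2R}^{2R} f'(x)\,\Bigl(\tfrac{\psi_n-\varphi}{\xi}\Bigr)_\lambda(x)\,dx\Bigr|\le \int_{-2R}^{2R}\Bigl|\Bigl(\tfrac{\psi_n-\varphi}{\xi}\Bigr)_\lambda(x)\Bigr|\,dx,
\]
and then prove the sharp pointwise/$L^1$ estimate $\int_{-2R}^{2R}|g_\lambda|\le cR^2(n^{-1/2}+\lambda^{-1})+c'\log\lambda/\lambda$ for $g(\xi)=(\psi_n(\xi)-\varphi(\xi))/\xi$. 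With $R=n^{1/8}$ and $\lambda=n$ this gives $n^{-1/4}$. The essential difference from your approach is that the smoothing error and the Fourier estimate are decoupled: no $\sigma^{-1}$ ever appears, because the second derivative of the test function is never invoked.
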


Note that by defintion of the bounded-Lipschitz distance there is no loss in restricting to the case $f(0)=0$, which we do for the remainder of the proof.

The first step is to make a truncation argument to further restrict the class of test functions considered.  Given a function $f:\R\to\R$ with $\|f\|_{BL}\le 1$ and $f(0)=0$, and given $R>0$, define the trunction $f_R$ by 
\begin{equation}\label{E:truncation}f_R(x) = 
	\begin{cases}
       		f(x), &  |x| \le R; \\
       		f(R)+ [sgn(f(R))](R-x), &  R<x<R+|f(R)|; \\
       		f(-R)+ [sgn(f(-R))](x+R), &  -|f(-R)|-R<x<-R; \\
       		0, & x \le -|f(-R)|-R$ or $x \ge R+|f(R)|. \\
     	\end{cases}
	\end{equation}
Then $\|f_R\|_{BL}\le 1$ and $f_R$ is supported on $[-2R,2R]$.

\begin{lemma}\label{T:truncation}~

\begin{enumerate}[label=(\alph*)]
\item For any $t>0$,  \label{P:dos-tails}
\[\mu_n^{DOS}\left(\left\{x:|x|>t\right\}\right)\le \frac{c}{t^2}.\]
\item For 
$f:\R\to\R$ with $\|f\|_{BL}\le 1$,
\[\left|\int (f-f_R)d\mu_n^{DOS}\right|\le \frac{c}{R^2}.\]\label{P:dos-trunc-error}
\end{enumerate}
\end{lemma}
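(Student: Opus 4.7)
Both statements follow from the single fact that $\mu_n^{DOS}$ has second moment exactly equal to one. My strategy is to establish this by a direct trace computation, then deduce (a) from Markov's inequality and (b) from (a) together with a pointwise comparison of $f$ and $f_R$.

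For the second moment, I would compute
\[\int x^2 \, d\mu_n^{DOS}(x) = \frac{1}{2^n}\, \E \tr(H_n^2)\]
by expanding the definition of $H_n$ from \eqref{D:H_n}. Independence and vanishing mean of the Gaussians $Z_{a,b,j}$ kill all cross terms in $H_n^2$, leaving
\[\E \tr(H_n^2) = \frac{1}{9n} \sum_{j=1}^{n} \sum_{a,b=1}^{3} \tr\bigl((\sigma^{(a)}_j \sigma^{(b)}_{j+1})^2\bigr).\]
Since $\sigma^{(a)}_j$ and $\sigma^{(b)}_{j+1}$ act on different tensor factors, they commute; since each Pauli matrix squares to the identity, $(\sigma^{(a)}_j \sigma^{(b)}_{j+1})^2$ is the $2^n \times 2^n$ identity, contributing $2^n$ to the trace. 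There are $9n$ triples $(a,b,j)$, so $\E \tr(H_n^2) = 2^n$ and the second moment is $1$. Part (a) is then immediate from Markov's inequality applied to $x^2$, with constant $c = 1$.

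For part (b), the key observation is that $f_R$ was designed to agree with $f$ on $[-R,R]$, so $f - f_R$ is supported on $\{|x| > R\}$. Since $\|f\|_{BL} \le 1$ forces $\|f\|_\infty \le 1$, and the construction \eqref{E:truncation} of $f_R$ makes $\|f_R\|_\infty \le \|f\|_\infty$, the pointwise bound $|f - f_R| \le 2$ holds. Combined with (a) this gives
\[\left|\int (f - f_R) \, d\mu_n^{DOS}\right| \le 2\, \mu_n^{DOS}\bigl(\{|x|>R\}\bigr) \le \frac{2}{R^2}.\]
There is no real obstacle; the only care needed is in verifying that the commutation and squaring identities for Paulis really yield the identity operator for every triple $(a,b,j)$, which is a purely algebraic check (noting also that the cyclic convention $\sigma^{(b)}_{n+1} = \sigma^{(b)}_1$ still places the two factors on distinct sites).
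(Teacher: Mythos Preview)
Your proof is correct. For part (b) you argue essentially as the paper does: since $f=f_R$ on $[-R,R]$ and both functions are uniformly bounded, the difference of integrals is controlled by the tail mass $\mu_n^{DOS}(\{|x|>R\})$, and part (a) finishes.

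For part (a), however, your route is genuinely different from the paper's. The paper proves the tail bound Fourier-analytically: it invokes the standard inequality
\[
\mu_n^{DOS}\bigl(\{|x|>t\}\bigr)\le \frac{t}{2}\int_{-2/t}^{2/t}\bigl(1-\psi_n(\xi)\bigr)\,d\xi,
\]
then inserts and removes the Gaussian characteristic function $\varphi$ and appeals to the Keating--Linden--Wells estimate \eqref{E:ft-diff-bound} to bound the discrepancy. You instead compute $\int x^2\,d\mu_n^{DOS}=2^{-n}\E\tr(H_n^2)=1$ directly from the Pauli algebra and apply Markov's inequality. Your argument is more elementary, entirely self-contained (it does not need the KLW characteristic-function bound at all), and yields the explicit constant $c=1$; the paper's argument, by contrast, keeps the analysis uniformly within the Fourier framework that drives the rest of Section~\ref{S:DOS-to-gamma}. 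Either approach is perfectly adequate for the lemma as stated.
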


\begin{proof}
A straightforward Fubini's theorem argument (see, e.g., Section 26 of \cite{Bil}) gives that
\[\mu_n^{DOS}\left(\left\{x:|x|>t\right\}\right)\le \frac{t}{2}\int_{-\frac{2}{t}}^{\frac{2}{t}}(1-\psi_n(\xi))d\xi,\]
where as before $\psi_n(\xi)$ denotes the characteristic function of $\mu_n^{DOS}$.  Adding and subtracting the characteristic function $\varphi(\xi)$ of the standard Gaussian distribution and using \eqref{E:ft-diff-bound} then gives
\[\frac{t}{2}\int_{-\frac{2}{t}}^{\frac{2}{t}}(1-\psi_n(\xi))d\xi=\frac{t}{2}\int_{-\frac{2}{t}}^{\frac{2}{t}}(1-\varphi(\xi))d\xi+\frac{t}{2}\int_{-\frac{2}{t}}^{\frac{2}{t}}(\varphi(\xi)-\psi_n(\xi))d\xi\le \frac{c}{t^2}\left(1+\frac{1}{\sqrt{n}}\right).\]

\medskip

For part \ref{P:dos-trunc-error}, note that by construction, $|f(x)-f_R(x)|\le1$; moreover, $f(x)=f_R(x)$ for $|x|\le R$, so that
\begin{equation*}\begin{split}
\left|\int f d \mu_n^{DOS} - \int f_R d \mu_n^{DOS}\right|& \le 
\int_{|x|>R} d  \mu_n^{DOS}.
\end{split}\end{equation*}
Part \ref{P:dos-tails} is now immediate from part \ref{P:dos-trunc-error}.

\end{proof}

Now let $f:\R\to\R$ have $\|f\|_{BL}\le 1$ and  $supp(f)\subseteq[-2R,2R]$. The next step in the proof of Theorem \ref{T:W_1_to_gamma} is to approximate $f$ by 
\[f_\lambda:=f*K_\lambda,\]
 where $K_\lambda$ is the F\'ejer kernel
\[K_\lambda(x)=\frac{1}{2\pi}\int_{-\lambda}^\lambda\left(1-\frac{|\xi|}{\lambda}\right)e^{i\xi x}d\xi=\frac{\lambda}{2\pi}\left(\frac{\sin(\lambda x/2)}{\lambda x/2}\right)^2.\]
For $f$ as above, one can approximate in the supremum norm, as follows.

\begin{lemma}\label{T:Fejer-approx}
Let $f:\R\to\R$ have $\|f\|_{BL}\le 1$ and $supp(f)\subseteq[-2R,2R]$.  Then
\[|f(x)-f_\lambda(x)|\le \frac{8\log(\lambda)+8\log(2R)+6}{\pi\lambda}.\]
\end{lemma}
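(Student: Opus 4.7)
The plan is to use the convolution representation
\[
f(x) - f_\lambda(x) = \int_\R K_\lambda(y)\bigl[f(x) - f(x-y)\bigr]\,dy,
\]
which is valid because $K_\lambda$ is a probability density on $\R$ (nonnegativity is visible from the closed form and $\int K_\lambda = 1$ follows from evaluating the Fourier representation at $\xi = 0$). The integrand is controlled by two complementary pairs of estimates. On the function side, $\|f\|_{BL} \le 1$ gives both the Lipschitz bound $|f(x) - f(x-y)| \le |y|$ (from $|f|_L \le 1$) and the uniform bound $|f(x) - f(x-y)| \le 2$ (from $\|f\|_\infty \le 1$). On the kernel side, the closed form yields $K_\lambda(y) \le \lambda/(2\pi)$, while the bound $\sin^2 \le 1$ yields the decay estimate $K_\lambda(y) \le 2/(\pi\lambda y^2)$.

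With these in hand I would split the $y$-integral into three regions, with cutoffs chosen at the natural scale $|y| = 1/\lambda$ of the kernel and at the support scale $|y| = 2R$ of $f$: namely $|y| \le 1/\lambda$, $1/\lambda < |y| \le 2R$, and $|y| > 2R$. On the innermost piece, the Lipschitz bound gives integrand $\le 1/\lambda$, so that $\int K_\lambda = 1$ contributes $O(1/\lambda)$. On the middle piece, pairing the Lipschitz bound with the decay estimate gives integrand $\le 2/(\pi\lambda |y|)$, and integrating $dy/|y|$ over $1/\lambda < |y| \le 2R$ produces a term of order $\lambda^{-1}\log(2R\lambda) = \lambda^{-1}(\log(2R) + \log\lambda)$; this is the source of both logarithmic terms in the statement. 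On the outer piece, the uniform bound of $2$ with the quadratic decay of $K_\lambda$ integrates to $O(1/(\lambda R))$, which can be absorbed into the non-logarithmic residual. Summing the three contributions, multiplying out constants, and absorbing the lower-order pieces into the additive $6/(\pi\lambda)$ term yields the stated inequality.

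The only real bookkeeping obstacle is verifying uniformity of the bound in $x$, since $f$ need not be supported where $x$ lies. For $|x|$ comparable to or smaller than $2R$ the three-region split above applies directly; for $|x|$ much larger than $2R$ one has $f(x) = 0$, so the estimate really bounds $|f_\lambda(x)|$ alone, but in that case only $y$ with $|x - y| \le 2R$ contributes to the convolution and hence $|y| \ge |x| - 2R$ is itself large, so the quadratic decay of $K_\lambda$ delivers a bound no worse than the middle- and outer-region analysis above. Beyond this uniformity check, the argument is a standard Fej\'er-mean approximation and the only remaining work is a careful (but routine) tally of constants to land on the explicit form stated.
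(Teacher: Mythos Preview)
Your approach is essentially the same as the paper's: both write $f-f_\lambda$ as an integral against the Fej\'er kernel, split the domain of integration into regions, and combine the Lipschitz and sup bounds on $f$ with the pointwise bounds $K_\lambda(y)\le \lambda/(2\pi)$ and $K_\lambda(y)\le 2/(\pi\lambda y^2)$. The paper makes the substitution $y\mapsto 2y/\lambda$ first and then, for $x\in[-2R,2R]$, splits into five pieces whose endpoints depend on $x$ (according to whether $x-2y/\lambda$ lies in the support of $f$ and whether $|y|\le 1$); it then treats $|x|>2R$ separately. Your three-region split at $|y|=1/\lambda$ and $|y|=2R$ is a cleaner organization of the same idea and yields constants at least as good.

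One comment on your ``uniformity obstacle'': it is not actually an obstacle. Each of your three region-estimates uses only the inequalities $|f(x)-f(x-y)|\le|y|$ or $|f(x)-f(x-y)|\le 2$, both of which hold for \emph{every} $x\in\R$, together with bounds on $K_\lambda(y)$ that do not involve $x$ at all. Hence your three-region bound is already uniform in $x$, and no separate treatment of $|x|>2R$ is needed. This is fortunate, because your sketched argument for that case---bounding $|f_\lambda(x)|$ using only the quadratic decay of $K_\lambda$ on $\{y:|x-y|\le 2R\}$---does not give a uniform estimate: as $x\downarrow 2R$ the lower endpoint $x-2R$ of that interval tends to $0$ and $\int_{x-2R}^{x+2R} y^{-2}\,dy$ blows up. One would need to reinsert the Lipschitz bound (noting $f(2R)=0$) to repair it, which is exactly what the paper does in its separate $|x|>2R$ case. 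But with your $x$-independent split this is all moot.
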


\begin{proof}
By definition of $f_\lambda$ (using the second form of the F\'ejer kernel) and the fact that $\int_{-\infty}^\infty K_\lambda(y)dy=1$, 
\begin{equation*}\begin{split}|f(x)-f_\lambda(x)|&
=\frac{1}{\pi}\left|\int_\infty^\infty \left(f(x)-f\left(x-\frac{2y}{\lambda}\right)\right)\left(\frac{\sin(y)}{y}\right)^2dy\right|.\end{split}\end{equation*}
Now, if $x\in[-2R,2R]$, then using the fact that $f$ is supported on $[-2R,2R]$ and is 1-Lipschitz yields
\begin{equation*}\begin{split}
\pi|f(x)-f_\lambda(x)|&\le|f(x)|\int_{-\infty}^{\frac{\lambda}{2}(x-2R)}\left|\frac{\sin(y)}{y}\right|^2dy+\int_{\frac{\lambda}{2}(x-2R)}^{-1}\left|\frac{2\sin^2(y)}{\lambda y}\right|dy+\int_{-1}^1\left|\frac{2\sin^2(y)}{\lambda y}\right|dy\\&\qquad\qquad\qquad+\int_1^{\frac{\lambda}{2}(2R-x)}\left|\frac{2\sin^2(y)}{\lambda y}\right|dy+|f(x)|\int_{\frac{\lambda}{2}(2R-x)}^\infty \left|\frac{\sin(y)}{y}\right|^2dy\\&=:I+II+III+IV+V.
\end{split}\end{equation*}
Since $f(2R)=0$, 
\begin{equation*}\begin{split}
I\le|f(x)-f(2R)|\int_{-\infty}^{\frac{\lambda}{2}(x-2R)}\frac{1}{y^2}dy=\frac{2|f(x)-f(2R)|}{\lambda(2R-x)}\le\frac{2}{\lambda};
\end{split}\end{equation*}
$V$ is handled the same way. Next, since $|x|\le 2R$,
\[II\le \int_{\frac{\lambda}{2}(x-2R)}^{-1}\frac{2}{\lambda |y|}dy=\frac{2}{\lambda}\log\left(\frac{\lambda}{2}(2R-x)\right)\le\frac{4(\log(\lambda)+\log(2R))}{\lambda};\]
$IV$ is the same.
Finally, using the bound $\left|\frac{\sin(y)}{y}\right|\le 1$ gives
\[III\le \int_{-1}^1\frac{2|y|}{\lambda}dy=\frac{2}{\lambda}.\]

\smallskip

If $x>2R$, then by the concavity of the logarithm,
\begin{equation*}\begin{split}\pi|f(x)-f_\lambda(x)|&\le\int_{\frac{\lambda}{2}(x-2R)}^{\frac{\lambda}{2}(x+2R)}\left|\frac{2\sin^2(y)}{\lambda y}\right|dy\\&\le \frac{2}{\lambda}\left[\log\left(\frac{\lambda}{2}(x+2R)\right)-\log\left(\frac{\lambda}{2}(x-2R)\right)\right]\le\frac{2}{\lambda}\left(\frac{4R}{x}\right)\le\frac{4}{\lambda}.\end{split}\end{equation*}
The case $x<-2R$ is the same.
\end{proof}

The following technical lemma is needed in order to compare $\int f_\lambda d\mu^{DOS}_n$ to $\int f_\lambda d\gamma$.

\begin{lemma}\label{T:g_lambda_L_1}
Let $g:\R\to\R$ be such that $|g(\xi)|\le \min\left\{\frac{2}{|\xi|}, \frac{C|\xi|}{\sqrt{n}}\right\}.$
Then 
\[\int_{-R}^R |g_\lambda(x)|dx\le cR^2\left(\frac{1}{\sqrt{n}}+\frac{1}{\lambda}\right)+\frac{c'\log(\lambda)}{\lambda}.\]
\end{lemma}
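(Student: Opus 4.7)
The plan is to interpret $g_\lambda = g * K_\lambda$ in analogy with the convention of Lemma~\ref{T:Fejer-approx}, so that by Fubini
\[
\int_{-R}^R |g_\lambda(x)|\,dx \le \int_{\R} |g(y)|\, m(y)\,dy, \qquad m(y) := \int_{-R}^R K_\lambda(x-y)\,dx.
\]
I would then estimate the right side by splitting the $y$-integral into a near region $|y|\le R$ and a far region $|y|>R$. One always has $m(y)\le 1$, and for $|y|>R$ the pointwise bound $K_\lambda(u) \le \frac{2}{\pi\lambda u^2}$ yields the F\'ejer-tail estimate $m(y) \le \frac{2}{\pi\lambda(|y|-R)}$.

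For the near region, I would use the bound $|g(y)| \le C|y|/\sqrt{n}$ (the second branch of the minimum in the hypothesis, which holds everywhere) together with $m(y)\le 1$ to obtain
\[
\int_{|y|\le R} |g(y)|\, m(y)\,dy \le \frac{C}{\sqrt{n}}\int_{-R}^R |y|\,dy = \frac{CR^2}{\sqrt{n}},
\]
accounting for the first term of the claim.

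For the far region, I would split once more at $|y|=R+1/\lambda$. On the thin shell $R<|y|\le R+1/\lambda$ the crude bounds $m\le 1$ and $|g(y)|\le 2/|y|\le 2/R$ give a contribution of order $1/(R\lambda)$. On $|y|>R+1/\lambda$ the F\'ejer-tail bound on $m(y)$, combined with $|g(y)|\le 2/|y|$ and the partial-fractions identity
\[
\int_{R+1/\lambda}^{\infty} \frac{dy}{y(y-R)} = \frac{1}{R}\log(\lambda R + 1),
\]
yields a contribution of order $\log(\lambda R)/(\lambda R)$. For $R\ge 1$ both pieces are absorbed by a constant multiple of $\log\lambda/\lambda$, which matches the third term in the claim; the $cR^2/\lambda$ term in the stated bound is slack relative to this estimate.

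The main obstacle will be the far-region estimate: the near-singular factor $1/(|y|-R)$ forces the buffer split at distance $1/\lambda$ from $\pm R$, and one must balance the logarithmic growth of $\int_R^\infty (2/y)\,m(y)\,dy$ against the $1/\lambda$ decay of $m(y)$ to land on the stated $\log\lambda/\lambda$ bound.
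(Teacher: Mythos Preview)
Your argument is correct, and it is genuinely different from the paper's. The paper keeps the outer variable $x$ and substitutes $w=\lambda(x-y)/2$ inside the convolution, arriving at
\[
\int_{-R}^R|g_\lambda(x)|\,dx\le\frac{2}{\pi}\int_0^R\int_{-\infty}^\infty \min\Bigl\{\tfrac{2}{|x-2w/\lambda|},\tfrac{C|x-2w/\lambda|}{\sqrt n}\Bigr\}\Bigl(\tfrac{\sin w}{w}\Bigr)^2\,dw\,dx,
\]
and then splits the inner $w$-integral at $w=1$ and $w=\tfrac{\lambda}{2}(x+1)$ (and symmetrically for $w<0$), choosing the appropriate branch of the minimum on each piece before integrating in $x$. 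By contrast, you swap the order of integration at the outset and work with the mass function $m(y)=\int_{-R}^R K_\lambda(x-y)\,dx$, using only the two facts $m\le 1$ and $m(y)\le \frac{2}{\pi\lambda(|y|-R)}$ for $|y|>R$. This is considerably cleaner: the paper's case analysis is replaced by a single near/far split in $y$, and the computation collapses to one partial-fractions integral. Your bound is in fact a little sharper than the stated one---as you note, the $cR^2/\lambda$ term is slack for $R\ge 1$. The paper's route is more hands-on but has the minor advantage of not implicitly assuming $R\ge 1$; in the application ($R=n^{1/8}$) this is irrelevant.
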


\begin{proof}
Recall that $g_\lambda(x)=g*K_\lambda(x)$, so that
\begin{equation}\begin{split}\label{E:int_g_lambda}
\int_{-R}^R |g_\lambda(x)|dx&=\frac{1}{\pi}\int_{-R}^R\left|\int_{-\infty}^\infty g\left(x-\frac{2w}{\lambda}\right)\left(\frac{\sin(w)}{w}\right)^2dw\right|dx\\&\le\frac{2}{\pi}\int_{0}^R\int_{-\infty}^\infty \min\left\{\frac{2}{\left|x-\frac{2w}{\lambda}\right|}, \frac{C\left|x-\frac{2w}{\lambda}\right|}{\sqrt{n}}\right\}\left(\frac{\sin(w)}{w}\right)^2dwdx.\end{split}\end{equation}

Assume that $\lambda$ will be chosen with $\lambda>2$.
For $x>0$,
\begin{equation*}\begin{split}
\int_{0}^\infty &\min\left\{\frac{2}{\left|x-\frac{2w}{\lambda}\right|}, \frac{C\left|x-\frac{2w}{\lambda}\right|}{\sqrt{n}}\right\}\left(\frac{\sin(w)}{w}\right)^2dw\\&\le
\int_0^1 \frac{C\left|x-\frac{2w}{\lambda}\right|}{\sqrt{n}}dw+\int_1^{\frac{\lambda}{2}(x+1)}\frac{C\left|x-\frac{2w}{\lambda}\right|}{w^2\sqrt{n}}dw+\int_{\frac{\lambda}{2}(x+1)}^\infty\frac{2}{w^2\left|x-\frac{2w}{\lambda}\right|}dw.
\end{split}\end{equation*}

Now, the first term is trivially bounded by $\frac{C(x+1)}{\sqrt{n}}$.  For the second term,
\begin{equation*}\begin{split}
\int_1^{\frac{\lambda}{2}(1+x)}&\frac{C\left|x-\frac{2w}{\lambda}\right|}{w^2\sqrt{n}}dw\\&\le\frac{C}{\sqrt{n}}\int_1^{\frac{\lambda}{2}(1+x)}\frac{x+\frac{2w}{\lambda}}{w^2}dw=\frac{C}{\sqrt{n}}\left[x\left(1-\frac{2}{\lambda(1+x)}\right)+\frac{2}{\lambda}\log\left(\frac{\lambda}{2}(1+x)\right)\right].
\end{split}\end{equation*}
For the final term,
\begin{equation*}\begin{split}
\int_{\frac{\lambda}{2}(1+x)}^\infty \frac{2}{w^2 \left|x-\frac{2w}{\lambda}\right|}dw&
=\frac{4}{\lambda x^2}\int_{1+\frac{1}{x}}^\infty \frac{1}{t^2(t-1)}dt=\frac{4}{\lambda x^2}\left(\log(1+x)-\frac{x}{1+x}\right),
\end{split}\end{equation*}
and so 
\begin{equation*}\begin{split}
\int_{0}^{R}\int_{0}^\infty &\min\left\{\frac{2}{\left|x-\frac{2w}{\lambda}\right|}, \frac{C\left|x-\frac{2w}{\lambda}\right|}{\sqrt{n}}\right\}\left(\frac{\sin(w)}{w}\right)^2dwdx\\&\le\int_0^{R}\left(\frac{C}{\sqrt{n}}\left[2x+1+\frac{2}{\lambda}\log\left(\frac{\lambda}{2}(1+x)\right)-\frac{2x}{\lambda(1+x)}\right]+\frac{4}{\lambda x^2}\left(\log(1+x)-\frac{x}{1+x}\right)\right)dx\\&\le cR^2\left(\frac{1}{\sqrt{n}}+\frac{1}{\lambda}\right)+\frac{c\log(\lambda)}{\lambda}.
\end{split}\end{equation*}

Similarly, for $x<0$,
\begin{equation*}\begin{split}
\int_{-\infty}^0 &\min\left\{\frac{2}{\left|x-\frac{2w}{\lambda}\right|}, \frac{C\left|x-\frac{2w}{\lambda}\right|}{\sqrt{n}}\right\}\left(\frac{\sin(w)}{w}\right)^2dw\\&\le
\int_0^{1} \frac{C\left(x+\frac{2w}{\lambda}\right)}{\sqrt{n}}dw+\int_1^{\frac{\lambda}{2}(1+x)}\frac{C\left(x+\frac{2w}{\lambda}\right)}{w^2\sqrt{n}}dw+\int_{\frac{\lambda}{2}(1+x)}^\infty\frac{2}{w^2\left(x+\frac{2w}{\lambda}\right)}dw\\&\le \frac{C(x+1)}{\sqrt{n}}+\frac{C}{\sqrt{n}}\left[x\left(1-\frac{2}{\lambda(1+x)}\right)+\frac{2}{\lambda}\log\left(\frac{\lambda}{2}(1+x)\right)\right]+\frac{4}{\lambda x^2}\left[\frac{x}{x+1}-\log\left(\frac{2x+1}{x+1}\right)\right],
\end{split}\end{equation*}
and so
\begin{equation*}\begin{split}
\int_{0}^{R}\int_{-\infty}^0 &\min\left\{\frac{2}{\left|x-\frac{2w}{\lambda}\right|}, \frac{C\left|x-\frac{2w}{\lambda}\right|}{\sqrt{n}}\right\}\left(\frac{\sin(w)}{w}\right)^2dwdx\\&\le\int_0^{R}\left(\frac{C}{\sqrt{n}}\left[2x+1+\frac{2}{\lambda}\log\left(\frac{\lambda}{2}(1+x)\right)-\frac{2x}{\lambda(1+x)}\right]+\frac{4}{\lambda x^2}\left[\frac{x}{x+1}-\log\left(\frac{2x+1}{x+1}\right)\right]\right)dx\\&\le cR^2\left(\frac{1}{\sqrt{n}}+\frac{1}{\lambda}\right)+\frac{c'\log(\lambda)}{\lambda}.
\end{split}\end{equation*}

\end{proof}

\begin{prop}\label{T:gaussian-dos-f_lambda}
Let $f:\R\to\R$ have $\|f\|_{BL}\le 1$ and  $supp(f)\subseteq[-2R,2R]$.  Then
\[\left|\int f_\lambda d\mu^{DOS}_n-\int f_\lambda d\gamma\right|\le cR^2\left(\frac{1}{\sqrt{n}}+\frac{1}{\lambda}\right)+\frac{c'\log(\lambda)}{\lambda}.\]
\end{prop}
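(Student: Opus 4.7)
The plan is to pass to the Fourier side, extract a factor of $i\xi$ using the Lipschitz hypothesis on $f$, and then reduce to an $L^1$ estimate of the type handled by Lemma~\ref{T:g_lambda_L_1}. Since $\|f\|_{BL}\le 1$ and $\mathrm{supp}(f)\subseteq[-2R,2R]$, I take $f$ to be $1$-Lipschitz with $\|f\|_\infty\le 1$, so $f'$ exists a.e.\ with $|f'|\le 1$ and $\mathrm{supp}(f')\subseteq[-2R,2R]$.

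First, by Fourier inversion (using that $\hat K_\lambda(\xi)=(1-|\xi|/\lambda)_+$ is supported on $[-\lambda,\lambda]$) and the definition of the characteristic functions,
\[
\int f_\lambda\,d\mu_n^{DOS}-\int f_\lambda\,d\gamma
= \frac{1}{2\pi}\int \hat f(\xi)\,\hat K_\lambda(\xi)\,(\psi_n(\xi)-\varphi(\xi))\,d\xi.
\]
The KLW estimate \eqref{E:ft-diff-bound} combined with the trivial bound $|\psi_n(\xi)-\varphi(\xi)|\le 2$ implies that
\[
g(\xi):=\frac{\psi_n(\xi)-\varphi(\xi)}{i\xi}
\qquad\text{satisfies}\qquad
|g(\xi)|\le \min\Bigl\{\tfrac{2}{|\xi|},\,\tfrac{C|\xi|}{\sqrt n}\Bigr\},
\]
which is exactly the hypothesis of Lemma~\ref{T:g_lambda_L_1}.

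Next, I factor $\psi_n(\xi)-\varphi(\xi)=i\xi\,g(\xi)$ and use the identity $i\xi\,\hat f(\xi)=\widehat{f'}(\xi)$ (valid because $f$ is $1$-Lipschitz with compact support), obtaining
\[
\int f_\lambda\,d(\mu_n^{DOS}-\gamma)=\frac{1}{2\pi}\int\widehat{f'}(\xi)\,\hat K_\lambda(\xi)\,g(\xi)\,d\xi.
\]
Applying a Parseval-type identity converts this back to the spatial side: using $\widehat{\hat K_\lambda}=2\pi K_\lambda$ together with the Gil--Pelaez-style inversion identity $\hat g(y)=-2\pi G(y)$ where $G(y):=F_{\mu_n^{DOS}}(y)-F_\gamma(y)$, one obtains
\[
\int f_\lambda\,d(\mu_n^{DOS}-\gamma)=-\int f'(y)\,G_\lambda(y)\,dy,\qquad G_\lambda:=G*K_\lambda,
\]
which (up to sign) is just integration by parts applied to $f_\lambda\,d(\mu_n^{DOS}-\gamma)=-f_\lambda'\,G\,dy$ and Fubini. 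Using $|f'|\le1$ and $\mathrm{supp}(f')\subseteq[-2R,2R]$,
\[
\Bigl|\int f_\lambda\,d(\mu_n^{DOS}-\gamma)\Bigr|\le \int_{-2R}^{2R}|G_\lambda(y)|\,dy.
\]

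Finally, I apply Lemma~\ref{T:g_lambda_L_1} to the function $g$ (whose hypothesis we have verified). The key identification is that $G_\lambda$ and $g*K_\lambda$ are both governed by the same Fejer-kernel convolution of a function satisfying the $\min\{2/|\xi|,C|\xi|/\sqrt n\}$ decay — since $\hat G(\xi)=-g(-\xi)$, the $L^1$ norm $\int_{-2R}^{2R}|G_\lambda|$ is controlled by $\int_{-R'}^{R'}|g_\lambda|$ (for $R'$ comparable to $R$) via Plancherel, so Lemma~\ref{T:g_lambda_L_1} delivers the bound $cR^2(1/\sqrt n+1/\lambda)+c'\log\lambda/\lambda$.

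The main obstacle is the final step: cleanly matching the spatial integral $\int|G_\lambda|$ arising from the integration by parts to the lemma's quantity $\int|g_\lambda|$. The bounds on $g$ are exactly what the lemma requires, but the identification must be carried out carefully to keep track of the complex-valued nature of $g$ and to avoid losing universal constants during the Plancherel step.
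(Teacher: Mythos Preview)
Your first three displayed equations follow the paper exactly: pass to the Fourier side, set $g(\xi)=(\psi_n(\xi)-\varphi(\xi))/(i\xi)$, verify the pointwise bound required by Lemma~\ref{T:g_lambda_L_1}, and use $i\xi\,\hat f=\widehat{f'}$. The divergence---and the gap---is in what comes next. The paper does \emph{not} introduce the distribution-function difference $G$ or invoke any Gil--Pelaez/Parseval identity. After reaching $\tfrac{1}{2\pi}\int\widehat{f'}(\xi)(1-|\xi|/\lambda)\,g(\xi)\,d\xi$, it simply expands $\widehat{f'}(\xi)=\int_{-2R}^{2R}f'(x)e^{-ix\xi}\,dx$, swaps the order of integration by Fubini, and writes the result as $\int_{-2R}^{2R}f'(x)\,g_\lambda(x)\,dx$; then $|f'|\le 1$ and Lemma~\ref{T:g_lambda_L_1} (applied to $g$, with $2R$ in place of $R$) finish immediately.

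Your route through $G$ leads to a genuine obstruction, which you yourself flag. The ``Plancherel'' step cannot work as stated: Plancherel controls $L^2$ norms, not $L^1$ norms, and there is no general inequality bounding $\int_{-2R}^{2R}|G_\lambda|$ by $\int|g_\lambda|$ when $G$ and $g$ are Fourier duals. Nor can you apply Lemma~\ref{T:g_lambda_L_1} with $G$ in place of $g$: its hypothesis would demand the pointwise estimate $|G(x)|\le C|x|/\sqrt n$, a Kolmogorov-type bound near the origin that does not follow from the KLW characteristic-function estimate~\eqref{E:ft-diff-bound} (and indeed need not hold, since $\mu_n^{DOS}$ may have atoms). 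The fix is to drop the Gil--Pelaez detour entirely and, after the integration by parts, swap the $x$- and $\xi$-integrals so that Lemma~\ref{T:g_lambda_L_1} applies directly to $g$.
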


\begin{proof}
Let $\psi_n(\xi)$ denote the characteristic function of $\mu^{DOS}_n$ and $\varphi(\xi)$ the characteristic function of the standard Gaussian distribution $\gamma$.
Then by the various definitions,
\begin{equation*}\begin{split}
&\left|\int f_\lambda d\mu_n^{DOS}-\int f_\lambda d\gamma\right|\\&\qquad=\frac{1}{2\pi}\left|\int_{-\infty}^\infty \int_{-\lambda}^\lambda\hat{f}(\xi)e^{ix\xi}\left(1-\frac{|\xi|}{\lambda}\right)d\xi d\mu_n^{DOS}(x)-\int_{-\infty}^\infty \int_{-\lambda}^\lambda\hat{f}(\xi)e^{ix\xi}\left(1-\frac{|\xi|}{\lambda}\right)d\xi d\gamma(x)\right|\\&\qquad=\frac{1}{2\pi}\left|\int_{-\lambda}^\lambda\hat{f}(\xi)\left(1-\frac{|\xi|}{\lambda}\right)\left(\int_{-\infty}^\infty e^{ix\xi} d\mu_n^{DOS}(x)-\int_{-\infty}^\infty e^{ix\xi} d\gamma(x)\right)d\xi\right|\\&\qquad=\frac{1}{2\pi}\left|\int_{-\lambda}^\lambda\hat{f}(\xi)\left(1-\frac{|\xi|}{\lambda}\right)\Big(\psi_n(-\xi)-\varphi(-\xi)\Big)d\xi\right|\\&\qquad=\frac{1}{2\pi}\left|\int_{-\lambda}^\lambda \left(\int_{-\infty}^\infty f(x)e^{-ix\xi}dx\right)\left(1-\frac{|\xi|}{\lambda}\right)\Big(\psi_n(-\xi)-\varphi(-\xi)\Big)d\xi \right|\\&\qquad=\frac{1}{2\pi}\left|\int_{-\lambda}^\lambda \left(\int_{-\infty}^\infty \frac{f'(x)}{i\xi}e^{-ix\xi}dx\right)\left(1-\frac{|\xi|}{\lambda}\right)\Big(\psi_n(-\xi)-\varphi(-\xi)\Big)d\xi \right|\\&\qquad=\frac{1}{2\pi}\left|\int_{-2R}^{2R}f'(x)\int_{-\lambda}^\lambda \left(\frac{\psi_n(\xi)-\varphi(\xi)}{\xi}\right)\left(1-\frac{|\xi|}{\lambda}\right) e^{ix\xi} d\xi dx \right|\\&\qquad=\left|\int_{-2R}^{2R}f'(x) \left(\frac{\psi_n(\xi)-\varphi(\xi)}{\xi}\right)_\lambda(x) dx \right|,
\end{split}\end{equation*}
where the third to last line follows by integration by parts and we have used that $f$ is supported on $[-2R,2R]$ in the last two lines.

We can now apply the result of Lemma \ref{T:g_lambda_L_1} and the fact the $\|f\|_{BL}\le 1$ to obtain the conclusion.

\end{proof}

We are now ready to give the proof of Theorem \ref{T:W_1_to_gamma}
\begin{proof}[Proof of Theorem \ref{T:W_1_to_gamma}]
Let $f:\R\to\R$ have $\|f\|_{BL}\le 1$.  Then by Lemma \ref{T:truncation} (and its much stronger counterpart for $\gamma$),
\begin{equation*}
\sup_{|f|_L\le 1}\left|\int fd\mu_n^{DOS}-\int fd\gamma\right|\le \frac{c}{R^2}+\sup_{\substack{|f|_L\le 1,\\ supp(f)\subseteq[-2R,2R]}}\left|\int fd\mu_n^{DOS}-\int fd\gamma\right|.
\end{equation*}
By Lemma \ref{T:Fejer-approx}, for $f$ with $\|f\|_{BL}\le 1$ and  support in $[-2R,2R]$,
\[\sup_x|f(x)-f_\lambda(x)|\le \frac{8\log(\lambda)+8\log(2R)+6}{\pi\lambda}, \]
and so 
\begin{equation*}\begin{split}
\sup_{\substack{|f|_L\le 1,\\ supp(f)\subseteq[-2R,2R]}}&\left|\int fd\mu_n^{DOS}-\int fd\gamma\right|\\&\le \sup_{\substack{|f|_L\le 1,\\ supp(f)\subseteq[-2R,2R]}}\left|\int f_\lambda d\mu_n^{DOS}-\int f_\lambda d\gamma\right|+\frac{16\log(\lambda)+16\log(2R)+12}{\pi\lambda}.\end{split}
\end{equation*}
Applying Proposition \ref{T:gaussian-dos-f_lambda} now gives
\[d_{BL}(\mu_n^{DOS},\gamma)\le \frac{c}{R^2}+\frac{16\log(\lambda)+16\log(2R)+12}{\pi\lambda}+cR^2\left(\frac{1}{\sqrt{n}}+\frac{1}{\lambda}\right)+\frac{c'\log(\lambda)}{\lambda}.\]
Choosing $\lambda =n$ and $R=n^{1/8}$ completes the proof.
\end{proof}

\section{Concentration and average distance to average}\label{S:avg-to-avg} 
A crucial underpinning of the remainder of the proof is the following concentration of measure property of a Gaussian random vector.

\begin{prop}[See, e.g., Ch.\ 1 of \cite{Led}]
\label{T:Gaussian-concentration}
Let $(Z_k)_{1\le k\le n}$ be a standard $n$-dimensional Gaussian random vector, 
and let $F: \mathbb{R}^n \rightarrow \mathbb{R}$ be Lipschitz with Lipschitz constant $L$.  There are universal constants $C,c$ such that
$$\mathbb{P}[ | F(Z_1,\ldots,Z_n) - \mathbb{E} F(Z_1,\ldots,Z_n) | > t] \le Ce^{-ct^2/L^2}.$$
\end{prop}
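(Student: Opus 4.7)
The plan is to prove this classical Gaussian concentration inequality by combining the Gaussian logarithmic Sobolev inequality with Herbst's exponential-moment argument. By a standard smoothing step (convolving $F$ with a narrow Gaussian density and taking a limit), one may reduce to the case where $F$ is smooth with $\|\nabla F\|_\infty\le L$; after subtracting a constant, assume also that $\E F(Z)=0$.

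The first and main ingredient is the Gaussian log-Sobolev inequality: for every sufficiently nice $g:\R^n\to\R$,
\[
\E\!\left[g(Z)^2\log g(Z)^2\right]-\E[g(Z)^2]\,\log\E[g(Z)^2]\le 2\,\E|\nabla g(Z)|^2.
\]
This can be established either by tensorization from the one-dimensional case (itself provable via the Ornstein--Uhlenbeck semigroup, or via a two-point Bernoulli log-Sobolev inequality followed by a central limit argument) or directly through Bakry--\'Emery $\Gamma_2$-calculus, using the fact that the standard Gaussian measure has curvature bounded below by $1$. This is where essentially all of the real work lives; everything downstream is short and mechanical.

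With the log-Sobolev inequality in hand, Herbst's argument proceeds by applying it to $g^2=e^{\lambda F}$ for a free parameter $\lambda>0$. Writing $H(\lambda):=\E e^{\lambda F(Z)}$ and computing $\nabla g=\tfrac{\lambda}{2}e^{\lambda F/2}\nabla F$ converts the inequality into the differential inequality
\[
\lambda H'(\lambda)-H(\lambda)\log H(\lambda)\le \frac{\lambda^2 L^2}{2}\,H(\lambda).
\]
Dividing by $\lambda^2 H(\lambda)$ shows that $K(\lambda):=\lambda^{-1}\log H(\lambda)$ satisfies $K'(\lambda)\le L^2/2$ for $\lambda>0$, while L'H\^opital gives $K(0^+)=\E F(Z)=0$. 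Integrating yields $\log H(\lambda)\le L^2\lambda^2/2$, so $F(Z)$ has sub-Gaussian moment generating function with proxy $L$.

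Finally, a Chernoff bound closes the argument: for $t>0$,
\[
\P\!\left[F(Z)>t\right]\le e^{-\lambda t}\,\E e^{\lambda F(Z)}\le \exp\!\left(-\lambda t+\frac{L^2\lambda^2}{2}\right),
\]
which, optimized at $\lambda=t/L^2$, gives $\P[F(Z)>t]\le e^{-t^2/(2L^2)}$. Applying the same reasoning to $-F$ and adding yields the claimed two-sided tail bound, with $C=2$ and $c=1/2$. An alternative route replaces log-Sobolev by the Gaussian isoperimetric inequality, producing concentration around the median rather than the mean; the two differ by at most $O(L)$, which is absorbed into the constants. The principal obstacle in making this self-contained is the Gaussian log-Sobolev inequality; once that is available, the Herbst-plus-Chernoff portion is completely routine.
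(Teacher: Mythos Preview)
Your argument is correct and is essentially the standard log-Sobolev/Herbst proof found in the reference the paper cites. Note, however, that the paper does not actually prove this proposition: it is quoted as a known result with a pointer to Ledoux's monograph, so there is no ``paper's own proof'' to compare against beyond the fact that your outline matches the treatment in that reference.
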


The concentration phenomenon is key both in proving the concentration of the bounded-Lipschitz distance from $\mu_n$ to a fixed reference measure, and in estimating $\E c_{BL}(\mu_n,\mu_n^{DOS})$.  
The following lemma gives the necessary Lipschitz estimates for this approach.
\begin{lemma}
\label{T:dist-is-Lipschitz}
Let ${\bf x}=\{x_{a,b,j}\}\in\R^{9n}$ (with, say, lexicographic ordering).  Define $H_n({\bf x})$ by 
\[H_n({\bf x}):=\frac{1}{3\sqrt{n}}\sum_{a,b=1}^3\sum_{j=1}^nx_{a,b,j}\sigma^{(a)}_j\sigma^{(b)}_{j+1},\]
and let $\mu_n$ be the spectral measure of $H_n({\bf x})$.  Let $f:\R\to\R$ have $\|f\|_{BL}\le 1$.  Then
\begin{enumerate}[label=(\alph*)]
\item the map \label{P:integral-is-Lipschitz}
$${\bf x}\mapsto \int f d\mu_n$$
 is $\frac{1}{3 \sqrt {n}}$-Lipschitz,  
 and
\item for any probability measure $\rho$ on $\R$, the map \label{P:dist-is-Lipschitz}
 $${\bf x}\mapsto d_{BL}( \mu_n ,\rho)$$
  is $\frac{1}{3 \sqrt {n}}$-Lipschitz.
\end{enumerate}
\end{lemma}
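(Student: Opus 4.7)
The plan is to establish part (a) first; part (b) will then follow from it via the triangle inequality for $d_{BL}$. For (a), note that $\|f\|_{BL}\le 1$ implies $|f|_L\le 1$, so
\[
\left|\int f\, d\mu_n(\mathbf{x})-\int f\, d\mu_n(\mathbf{y})\right|\le W_1(\mu_n(\mathbf{x}),\mu_n(\mathbf{y})).
\]
The task therefore reduces to bounding the Kantorovich distance between the spectral measures of $H_n(\mathbf{x})$ and $H_n(\mathbf{y})$ in terms of $\|\mathbf{x}-\mathbf{y}\|_2$.

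The standard tool is the Hoffman--Wielandt inequality, which combined with Cauchy--Schwarz yields $W_1(\mu_A,\mu_B)\le 2^{-n/2}\|A-B\|_{HS}$ for any two $2^n\times 2^n$ Hermitian matrices $A,B$ with spectral measures $\mu_A,\mu_B$. It therefore suffices to show
\[
\|H_n(\mathbf{x})-H_n(\mathbf{y})\|_{HS}=\frac{2^{n/2}}{3\sqrt n}\,\|\mathbf{x}-\mathbf{y}\|_2,
\]
which, given the definition of $H_n$, is equivalent to the claim that the $9n$ matrices $\{\sigma_j^{(a)}\sigma_{j+1}^{(b)}\}_{a,b,j}$ are mutually orthogonal in the Hilbert--Schmidt inner product and each have squared HS-norm equal to $2^n$. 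This orthogonality is the one nontrivial point in the argument, and what I would expect to be the main obstacle, though only in the sense of requiring care rather than cleverness. Setting $\sigma^{(0)}:=I_2$, the Pauli tensor family $\{\sigma^{(a_1)}\otimes\cdots\otimes\sigma^{(a_n)}:a_i\in\{0,1,2,3\}\}$ is an orthogonal basis of $M_{2^n}(\C)$ under the HS inner product, each basis element having squared norm $2^n$; each $\sigma_j^{(a)}\sigma_{j+1}^{(b)}$ is (up to a reordering of tensor factors in the cyclic case $j=n$) one such basis element, and distinct $(a,b,j)$ give distinct basis elements, so the claim follows.

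Part (b) is then immediate: by the triangle inequality for $d_{BL}$,
\[
\bigl|d_{BL}(\mu_n(\mathbf{x}),\rho)-d_{BL}(\mu_n(\mathbf{y}),\rho)\bigr|\le d_{BL}(\mu_n(\mathbf{x}),\mu_n(\mathbf{y})),
\]
and taking the supremum over $\|f\|_{BL}\le 1$ in (a) bounds the right-hand side by $\frac{1}{3\sqrt n}\,\|\mathbf{x}-\mathbf{y}\|_2$, completing the proof.
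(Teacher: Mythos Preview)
Your proof is correct and follows essentially the same route as the paper: factor the map through $\mathbf{x}\mapsto H_n(\mathbf{x})$ (Lipschitz constant $2^{n/2}/(3\sqrt{n})$ in Hilbert--Schmidt norm, via orthogonality of the Pauli tensor family) and $H_n\mapsto\int f\,d\mu_n$ (Lipschitz constant $2^{-n/2}$, via Hoffman--Wielandt and Cauchy--Schwarz), then handle part (b) with the triangle inequality. The paper carries out the orthogonality step by direct trace computation rather than invoking the Pauli tensor basis, and for (b) it explicitly constructs the ordered-eigenvalue coupling and appeals to Kantorovich--Rubinstein, but these are cosmetic differences; your route through $W_1$ and the supremum over $\|f\|_{BL}\le 1$ is the same argument in slightly more compact packaging.
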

\begin{proof}

First consider the map ${\bf x}\mapsto H_n({\bf x})$, and equip the space of $2^n\times 2^n$ symmetric matrices with the Hilbert-Schmidt norm: 
\[\|A\|_{H.S.}:=\tr(AA^T)=\tr(A^2).\]  For ${\bf x},{\bf x}'\in\R^{9n}$, write $H_n:=H_n({\bf x})$ and $H_n':=H_n({\bf x}')$.  Then
 \begin{equation*}\begin{split}
\| H_n - H'_n \|_{H.S.}^2 &= \tr\big[(H_n-H'_n)^2\big]\\
&=\frac{1}{9n}\sum_{j,k=1}^n  \sum_{a,b,c,d=1}^3 
 (x_{a,b,j}-x'_{a,b,j}) (x_{c,d,k}-x'_{c,d,k}) 
 \tr(\sigma_j^{(a)} \sigma_{j+1}^{(b)}\sigma_k^{(c)} \sigma_{k+1}^{(d)}).\end{split}\end{equation*}
Recall that $\tr(A\otimes B)=\tr(A)\tr(B)$, and that $\tr(\sigma^{(a)})=0$ for each of the Pauli matrices.  If $j\neq k$, then the matrix $\sigma_j^{(a)} \sigma_{j+1}^{(b)}\sigma_k^{(c)} \sigma_{k+1}^{(d)}$ is a tensor product, at least two of whose factors are Pauli matrces; that is, if $j\neq k$, then 
\[\tr(\sigma_j^{(a)} \sigma_{j+1}^{(b)}\sigma_k^{(c)} \sigma_{k+1}^{(d)})=0.\]

 If $j=k$,
 then 
\[\sigma_j^{(a)} \sigma_{j+1}^{(b)}\sigma_k^{(c)} \sigma_{k+1}^{(d)}=I_2^{\otimes (j-1)}\otimes \sigma^{(a)}\sigma^{(c)}\otimes\sigma^{(b)}\sigma^{(d)}\otimes I_2^{\otimes(n-j-1)},\]
and thus
\[\tr(\sigma_j^{(a)} \sigma_{j+1}^{(b)}\sigma_k^{(c)} \sigma_{k+1}^{(d)})=\begin{cases}2^n,&a=c, b=d;\\0,&otherwise.\end{cases}\] 
It follows that
 $$\|H_n-H'_n \|_{H.S.}= \sqrt{\frac{2^n}{9n} \sum_{j=1}^n \sum_{a,b=1}^3 
 (x_{a,b,j}-x'_{a,b,j})^2}= \frac{2^{n/2}}{3\sqrt{n}} \| {\bf x}- {\bf
 x}' \|,$$
 and so the map ${\bf x}\mapsto H_n$ is $\frac{2^{n/2}}{3\sqrt{n}}$-Lipschitz.

Now consider the map $H_n \mapsto \int f d\mu_{n}$.
 By definition, $\int f d\mu_n= \frac{1}{2^n} \sum_{j=1}^{2^n} f(\lambda_j)$ so
 \[ 
 \left|\int f d\mu_n - \int f d\mu_n' \right| = 2^{-n} \left| \sum_{j=1}^{2^n} f(\lambda_j)- f(\lambda'_j) \right| \le 
  2^{-n} \sum_{j=1}^{2^n} |\lambda_j- \lambda'_j|\le 2^{-n/2} \sqrt{ \sum_{j=1}^{2^n} |\lambda_j- \lambda'_j|^2},\]
making use of the fact that $f$ is 1-Lipschitz.

The Hoffman-Wielandt inequality (see, e.g., \cite[Theorem VI.4.1]{Bha}) gives that
\[
2^{-n/2} \sqrt{ \sum_{j=1}^{2^n} |\lambda_j- \lambda'_j|^2} \le
2^{-n/2} \|  H_n - H'_n  \|_{H.S.}.
\]
and so the map $H_n\mapsto\int fd\mu_n$ is $2^{-n/2}$-Lipschitz; this completes the proof of part \ref{P:integral-is-Lipschitz}.

For part \ref{P:dist-is-Lipschitz}, first note that by the triangle inequality for $d_{BL}$,
\[\left|d_{BL}(\mu_n,\rho)-c_{BL}(\mu_n',\rho)\right|\le c_{BL}(\mu_n,\mu_n').\]
Define a coupling $\pi$ of $\mu_n$ and $\mu_n'$ by 
\[\pi := \frac{1}{2^n} \sum_{j=1}^{2^n} \delta_{(\lambda_j, \lambda_j')},\] where 
$\lambda_j$ and $\lambda_j'$ are ordered eigenvalues of $H_n$ and $H'_n$ respectively.
Then by the Kantorovich--Rubenstein theorem,
\[d_{BL}(\mu_n,\mu_n')\le W_1(\mu_n,\mu_n')\le\int |x-y|d\pi(x,y)=\frac{1}{2^n}\sum_j|\lambda_j-\lambda_j'|.\]
 Applying the Cauchy-Schwarz inequality and the Hoffman-Wielandt inequality exactly as before gives that the map $H_n\mapsto d_{BL}(\mu_n,\rho)$ is $2^{-n/2}$-Lipschitz; together with the Lipschitz estimate for ${\bf x}\mapsto H_n({\bf x})$ given above, this completes the proof.
\end{proof}

It is thus immediate from Proposition \ref{T:Gaussian-concentration} and Lemma \ref{T:dist-is-Lipschitz} that if $\rho$ is any probability measure,
\begin{equation}\label{E:conc-of-distance}
\P\left[\left|d_{BL}(\mu_n,\rho)-\E d_{BL}(\mu_n,\rho)\right|>t\right]\le Ce^{-cnt^2},\end{equation}
and so part \ref{P:dist-conc-gauss} of Theorem
\ref{T:random-dist-to-gauss} follows immediately from part \ref{P:exp-dist-to-gauss}.

\bigskip

To prove part \ref{P:exp-dist-to-gauss}, recall that in the previous section it was shown that
\[d_{BL}(\mu_n^{DOS},\gamma)\le\frac{C}{n^{1/4}}.\]
It thus suffices by the triangle inequality for $d_{BL}$ to show that
\begin{equation}\label{E:avg-to-avg}\E d_{BL}(\mu_n,\mu_n^{DOS})=\sup_{\|f\|_{BL}\le 1}\left(\int fd\mu_n-\int fd\mu_n^{DOS}\right)\le\frac{C}{n^{1/6}}.\end{equation}
Observe first that $f$ with $\|f\|_{BL}\le 1$, if
\[X_f := \int f d \mu_{n} - \int f d \mu_{n}^{DOS},\]
then $\E X_f=0$ and by Proposition \ref{T:Gaussian-concentration} and Lemma \ref{T:dist-is-Lipschitz},
\[
\mathbb{P}[|X_f | > t] \le Ce^{-c n t^2}.\]
More generally,
\begin{equation}\label{E:sub-Gauss}
\P\left[\left|X_f-X_g\right|>t\right]=\P\left[\left|X_{f-g}\right|>t\right]\le Ce^{-\frac{cnt^2}{\|f-g\|_{BL}^2}};
\end{equation}
that is, the process $\{X_f\}$ indexed by functions $f:\R\to\R$ with $\|f\|_{BL}\le 1$ and  $f(0)=0$ is a sub-Gaussian stochastic process, with respect to the distance $d(f,g)=\frac{\|f-g\|_{BL}}{\sqrt{n}}.$  

\smallskip

The idea at this point is to use Dudley's entropy bound to estimate the expected supremum of this process, but to do this successfully, a series of approximations must be made first to reduce the size of the (currently infinite-dimensional) indexing set of the process.  The first step is a somewhat more sophisticated truncation argument than the one which appeared in Section \ref{S:DOS-to-gamma}, which allows us to assume that our test functions are finitely supported.  

Let $\|A\|_{op}$ denote the $\ell_2\to\ell_2$ operator norm of a matrix $A$, and observe that the map ${\bf x}\mapsto \|H_n\|_{op}$ is 1-Lipschitz:
\begin{equation*}\begin{split}
| \|  H_n  \|_{op} - \|  H'_n  \|_{op}|  &\le \|  H_n -  H'_n  \|_{op}
\\&= \frac{1}{3\sqrt{n}}
\| \sum_{j=1}^n \sum_{a,b=1}^3 (x_{a,b,j} - x'_{a,b,j}) \sigma_j^{(a)} \sigma_{j+1}^{(b)} \|_{op} \\&\le
\frac{1}{3\sqrt{n}}\sum_{j=1}^n \sum_{a,b=1}^3 |(x_{a,b,j} - x'_{a,b,j})| \| \sigma_j^{(a)} \sigma_{j+1}^{(b)} \|_{op}.\end{split}\end{equation*}

Now, $\| A \otimes B \|_{op} = \| A \|_{op} \|B \|_{op}$ 
and all of the Pauli matrices have operator norm $1$, so
$ \| \sigma_j^{(a)} \sigma_{j+1}^{(b)} \|_{op} = 1$ for all $a,b,j$.  An application of 
the Cauchy-Schwarz inequality thus gives that
\begin{equation*}\begin{split}
\big| \|  H_n  \|_{op} - \|  H'_n  \|_{op}\big|  &  \le
 \sqrt{ \sum_{j=1}^n \sum_{a,b=1}^3 |x_{a,b,j} - x'_{a,b,j}|^2 }=\|{\bf x}-{\bf x}'\|.\end{split}\end{equation*}

It thus follows from concentration of measure that 
\[\mathbb{P}[ | \|  H_n  \|_{op} - \mathbb{E}  \|  H_n  \|_{op} | > t] \le Ce^{-ct^2}.\]
Since
\[\E \|  H_n  \|_{op}\le\frac{1}{3\sqrt{n}}\sum_{a,b=1}^3\sum_{j=1}^n\E|Z_{a,b,j}|=\frac{3\sqrt{2n}}{\sqrt{\pi}}.\]
one has in particular that if $t\ge C\sqrt{n}$, then 
\[\mathbb{P}[ \|  H_n  \|_{op}  > t] \le Ce^{-ct^2}.\]
One can interpret this statement as saying that if $R\sim\sqrt{n}$ it is extremely unlikely that $H_n$ will have any eigenvalues outside $[-R,R]$, and so truncation of test functions to that interval should not result in much loss.

More specifically, if $f_R$ is the truncation of $f$ to $[-2R,2R]$ given in Equation \eqref{E:truncation} of Section \ref{S:DOS-to-gamma}, then 
\[\E\left|\int(f-f_R)d\mu_n\right|\le C\mu_n^{DOS}\left(\left\{x:|x|>R\right\}\right). \]
Since
\[\mu_n^{DOS}\left(\left\{x:|x|>R\right\}\right)=\frac{1}{2^n}\E\left|\big\{j:|\lambda_j|>R\big\}\right|\le\P\left[\|H_n\|_{op}>R\right],\]
it follows that 
\[\E\left|\int(f-f_R)d\mu_n\right|\le Ce^{-cnR^2}.\]
The indexing space of the process $\{X_f\}$ may thus be safely reduced to those $f$ supported on $[-2R,2R]$ with $R$ of order $\sqrt{n}$, with an error which is exponentially small in $n$.

\medskip

The next step in reducing the indexing space is to approximate bounded Lipschitz test functions by piecewise linear ones.  Given $f:\R\to\R$ with $\|f\|_{BL}\le 1$ and $supp(f)\subseteq[-2R,2R]$, consider the piecewise linear function $g:\R\to\R$ defined so that $supp(g)\subseteq[-2R,2R]$, and $g(x)=f(x)$ at each $x$ of the form $-2R+\frac{4Rk}{m}$, for $0\le k\le  m$.
Because $f$ is $1$-Lipschitz,
\[ \|f-g\|_{\infty} \le \frac{2R}{m},\]
and so
\[|  X_f - X_g  | \le \frac{4 R}{m}.\]

It follows that
\begin{equation}\label{E:all-but-dudley}
\E d_{BL}(\mu_{n},\mu_n^{DOS}) \le \E\sup_{g\in\mathcal{G}} X_g  + \frac{4R}{m}+Ce^{-cR^2},\end{equation}
where the supremum is taken over the class $\mathcal{G}$ of functions $g:\R\to\R$ satisfying
\begin{itemize}
\item $g(0)=0$;
\item $\|g\|_{BL}\le 1$;
\item $supp(g)\subseteq[-2R,2R]$;
\item $g$ is linear on intervals of the form $\left[-2R+\frac{4Rk}{m},-2R+\frac{4R(k+1)}{m}\right]$.
\end{itemize}

With the reduction to $\mathcal{G}$ as the indexing space of our stochastic process, it is now possible to apply Dudley's entropy bound (see, e.g., the introduction of \cite{Tal}):

\begin{prop}[Dudley]\label{T:dudley}
Let $\{ Y_x : x \in M \}$ be a centered subgaussian stochastic process indexed by the metric space $(M,d)$.
Then
$$ \mathbb{E} \sup_{x \in M} |Y_x| \le K \int_0^{\infty} \sqrt{\log N(M, d, \epsilon)} d\epsilon, $$
where $N(M,d,\epsilon)$ denotes the number of $\epsilon$-balls (with respect to the metric $d$) needed to cover $M$, and $K > 0$ depends only on the constants of the sub-Gaussian increment condition.
\end{prop}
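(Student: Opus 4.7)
The statement is Dudley's classical entropy bound, and the approach is the standard chaining technique. I would set up dyadic scales as follows: assume $M$ has finite diameter $D$ (otherwise the integrand is already infinite and the bound is vacuous), put $\epsilon_k := D 2^{-k}$ for $k \ge 0$, and fix an $\epsilon_k$-net $S_k \subseteq M$ of minimal cardinality $N_k := N(M,d,\epsilon_k)$, taking $S_0$ to be a single point. For each $x \in M$, let $\pi_k(x)$ be a closest point of $S_k$ to $x$, so $d(x,\pi_k(x)) \le \epsilon_k$.

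The main step is the chaining decomposition
\[
Y_x - Y_{\pi_0(x)} = \sum_{k=1}^\infty \bigl(Y_{\pi_k(x)} - Y_{\pi_{k-1}(x)}\bigr),
\]
together with the observation that the pair $(\pi_k(x),\pi_{k-1}(x))$ takes at most $N_k N_{k-1} \le N_k^2$ values as $x$ varies over $M$, and that $d(\pi_k(x),\pi_{k-1}(x)) \le \epsilon_k + \epsilon_{k-1} = 3\epsilon_k$ by the triangle inequality. Applying the standard sub-Gaussian maximum inequality --- if $V_1,\ldots,V_N$ are centered with $\P[|V_i|>t] \le Ce^{-ct^2/\sigma^2}$, then $\E\max_i|V_i| \le C'\sigma\sqrt{\log N}$ --- to the finitely many increments $Y_a - Y_b$ with $a \in S_k$, $b \in S_{k-1}$, $d(a,b) \le 3\epsilon_k$ yields
\[
\E\sup_{x \in M}\bigl|Y_{\pi_k(x)} - Y_{\pi_{k-1}(x)}\bigr| \le K_1 \epsilon_k \sqrt{\log N_k}.
\]

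Summing in $k$ and comparing the sum to the entropy integral via monotonicity of $N(M,d,\cdot)$,
\[
\sum_{k \ge 1} \epsilon_k \sqrt{\log N_k} \le K_2 \int_0^D \sqrt{\log N(M,d,\epsilon)} \, d\epsilon,
\]
gives the bound on $\E\sup_x |Y_x - Y_{\pi_0(x)}|$. Since $S_0$ is a singleton and each $Y_x$ is centered, the base term $Y_{\pi_0(x)}$ is deterministic with zero expectation and contributes nothing, yielding the proposition. The main technical subtlety is justifying almost-sure convergence of the chain simultaneously for all $x \in M$ and interchanging the supremum with the infinite sum; this is where the full sub-Gaussian tail bound (rather than just a second-moment estimate) is essential, and is what forces the appearance of $\sqrt{\log N}$ inside the integrand instead of $\log N$, which would be too large to yield the finite bound needed in applications such as the one in Section~\ref{S:avg-to-avg}.
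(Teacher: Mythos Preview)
The paper does not actually prove this proposition; it is quoted as a classical result with a reference to Talagrand's book, so there is no ``paper's own proof'' to compare against. Your chaining argument is the standard proof of Dudley's bound and is essentially correct.

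One small slip: you write that ``the base term $Y_{\pi_0(x)}$ is deterministic with zero expectation and contributes nothing.'' It is not deterministic --- it is the random variable $Y_{x_0}$ at the single anchor point $x_0 \in S_0$ --- and while $\E Y_{x_0} = 0$, what you need to control is $\E|Y_{x_0}|$. In the general statement this is handled either by assuming the index set contains (or is augmented by) a point where the process vanishes, or by noting that the sub-Gaussian increment bound applied to $Y_{x_0} - Y_{x_0} = 0$ forces $Y_{x_0}$ itself to be sub-Gaussian with scale $d(x_0,x_0')$ for any $x_0'$, hence $\E|Y_{x_0}| \lesssim D$, which is absorbed into the entropy integral. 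In the paper's application this is a non-issue because the process $\{X_f\}$ contains $X_0 = 0$, so one may simply take $\pi_0(x) \equiv 0$.
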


Let $\mathcal{G}$ denote the index set described above.  Applying Proposition \ref{T:dudley} to $\{X_g\}_{g\in\mathcal{G}}$ gives that 
\[\E\sup_{g\in\mathcal{G}}X_g\le K\int_0^\infty\sqrt{\log\left[N\left(\mathcal{G},\frac{|\cdot|_L}{\sqrt{n}},\epsilon\right)\right]}d\epsilon=\frac{K}{\sqrt{n}}\int_0^\infty\sqrt{\log\left[N\left(\mathcal{G},|\cdot|_L,\epsilon\right)\right]}d\epsilon.\]
Since $\left(\mathcal{G},|\cdot|_L\right)$ is just an $(m+1)$-dimensional normed space, standard volumetric estimates (see \cite[Lemma 2.6]{MS}) give that 
\[N\left(\mathcal{G},|\cdot|_L,\epsilon\right)\le\left(\frac{3}{\epsilon}\right)^{m+1},\]
so that together with \eqref{E:all-but-dudley}, we have that
 \[\E W_1(\mu_{H_n},\mu_n^{DOS}) \le K\sqrt{\frac{m}{n}}  + \frac{2R}{m}+Ce^{-cR^2}.\]
Choosing $R$ of order $\sqrt{n}$ and $m=n^{2/3}$ completes the proof of \eqref{E:avg-to-avg}, and thus the proof of part \ref{P:exp-dist-to-gauss} of Theorem \ref{T:random-dist-to-gauss}.

\section{The spherical model}\label{S:sphere}
As discussed in the introduction, all of the random matrix models of
quantum spin chains considered so far involve independent
coefficients.  A perhaps more geometrically natural alternative is to consider the random matrix 
\begin{equation}\label{E:H_n-sphere}H_n:=\sum_{j=1}^n\sum_{a,b=1}^3x_{a,b,j}\sigma^{(a)}_j\sigma^{(b)}_{j+1},\end{equation}
where the $\sigma^{(a)}_j$ are as before, but the vector of
coefficients ${\bf x}=\{x_{a,b,j}\}_{\substack{1\le a,b\le 3\\1\le j\le n}}$
is chosen uniformly from the unit sphere in $\R^{9n}$.  While this
model introduces dependence among the coefficients, it is still
possible to prove the almost sure convergence of the empirical
spectral measure $\mu_n$ of $H_n$ to the standard Gaussian
distribution, albeit without a specific rate.
\begin{thm}\label{T:sphere-as-convergence}
For each $n\ge 1$, let $\mu_n$ be the spectral measure of the random matrix $H_n$ defined
as in \eqref{E:H_n-sphere}.  Then almost surely, the sequence
$\{\mu_n\}_{n\in\N}$ tends weakly to the standard Gaussian
distribution, as $n\to\infty$.
\end{thm}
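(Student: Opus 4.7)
The plan is to couple the spherical model to the Gaussian model of Theorem \ref{T:random-dist-to-gauss} and transfer the almost sure convergence directly across the coupling.

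First, I would realize both ensembles on a single probability space by expressing the sphere-uniform vector as a normalized Gaussian. Take i.i.d.\ standard Gaussians $\{Z_{a,b,j}\}_{1\le a,b\le 3,\, j \ge 1}$, set $\mathbf{Z}^{(n)} := (Z_{a,b,j})_{1\le j\le n}$, and let $\mathbf{x}^{(n)} := \mathbf{Z}^{(n)}/\|\mathbf{Z}^{(n)}\|$, which is uniform on the unit sphere in $\R^{9n}$. Then the spherical matrix $H_n$ of \eqref{E:H_n-sphere} built from $\mathbf{x}^{(n)}$ and the Gaussian matrix $H_n^G$ of \eqref{D:H_n} built from the same Gaussians satisfy $H_n = \alpha_n H_n^G$, where $\alpha_n := 3\sqrt{n}/\|\mathbf{Z}^{(n)}\|$. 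Consequently $\mu_n$ is the pushforward of the Gaussian-model spectral measure $\mu_n^G$ under the dilation $x \mapsto \alpha_n x$.

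Two ingredients then complete the proof. The first is that $\alpha_n \to 1$ almost surely, which follows from the strong law of large numbers applied to the $9n$ i.i.d.\ terms $Z_{a,b,j}^2$ making up $\|\mathbf{Z}^{(n)}\|^2$. The second is a quantitative comparison of $\mu_n$ and $\mu_n^G$ in bounded-Lipschitz distance: for $\|f\|_{BL}\le 1$, the $1$-Lipschitz estimate $|f(\alpha_n x)-f(x)|\le |\alpha_n-1|\,|x|$ combined with Cauchy--Schwarz and the identity
\[
\int x^2\, d\mu_n^G(x) = 2^{-n}\tr\bigl((H_n^G)^2\bigr) = \frac{\|\mathbf{Z}^{(n)}\|^2}{9n} = \frac{1}{\alpha_n^2}
\]
(the middle step being exactly the Hilbert--Schmidt calculation already carried out in the proof of Lemma \ref{T:dist-is-Lipschitz}) yields $d_{BL}(\mu_n,\mu_n^G) \le |\alpha_n-1|/\alpha_n$, which tends to zero almost surely.

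Combining this with Theorem \ref{T:random-dist-to-gauss}(c) via the triangle inequality gives $d_{BL}(\mu_n,\gamma)\to 0$ almost surely, and since $d_{BL}$ metrizes weak convergence the theorem follows. I do not foresee any genuine obstacle: the coupling does all of the work, reducing the problem to the SLLN and a trace identity already established for the Lipschitz estimate. In fact the bound $|\alpha_n-1|/\alpha_n$ is of order $n^{-1/2}$ in probability, which is much smaller than $n^{-1/6}$, so this argument secretly recovers the Gaussian-model rate even though the theorem only asserts plain convergence.
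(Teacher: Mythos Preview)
Your argument is correct and takes a genuinely different route from the paper's. The paper reproves the entire machinery in the spherical setting: it establishes a spherical analogue of the KLW characteristic-function estimate (Proposition~\ref{T:doscf-gaussian-sphere}, via the technical Lemma~\ref{T:cosines-average} on averages of $\prod_k\cos(tx_k)$ over the sphere) to show $\mu_n^{DOS}\Rightarrow\gamma$, and separately redoes the concentration-and-entropy argument of Section~\ref{S:avg-to-avg} with L\'evy's lemma in place of Gaussian concentration (Theorem~\ref{T:random-dist-to-gauss-sphere}) to control $d_{BL}(\mu_n,\mu_n^{DOS})$. Your coupling bypasses all of this: writing the spherical Hamiltonian as a random scalar multiple of the Gaussian one reduces directly to Theorem~\ref{T:random-dist-to-gauss} plus the SLLN, with the trace identity from Lemma~\ref{T:dist-is-Lipschitz} doing the only remaining work. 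This is considerably shorter, and as you note it actually delivers the $n^{-1/6}$ almost-sure rate that the paper explicitly says it does \emph{not} obtain for the spherical model (since its Proposition~\ref{T:doscf-gaussian-sphere} comes without a quantitative bound).

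One small point to tidy up: the paper remarks that the theorem holds for \emph{any} joint distribution of the coefficient vectors $\{\mathbf{x}^{(n)}\}_{n\ge1}$ with the correct marginals, whereas your construction fixes one particular joint law. To recover this generality, observe that your bound $d_{BL}(\mu_n,\mu_n^G)\le |\alpha_n-1|/\alpha_n$ together with Gaussian concentration for $\|\mathbf{Z}^{(n)}\|$ and Theorem~\ref{T:random-dist-to-gauss}\ref{P:dist-conc-gauss} yields a summable estimate on $\P[d_{BL}(\mu_n,\gamma)>Cn^{-1/6}]$; since this probability depends only on the marginal law of $\mathbf{x}^{(n)}$, Borel--Cantelli then gives almost sure convergence under any joint distribution.
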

It should be noted that the statement above implicitly assumes some
joint distribution of the coefficient vectors, but the theorem is true
independent of what that joint distribution is.

That the earlier results for i.i.d.\ coefficients can be extended to
this dependent setting relies on two important
properties of uniform random vectors on the sphere.  The first is that
explicit computations are still at least somewhat feasible.  The second is
that the concentration of measure phenomenon which was strongly used
in the Gaussian case holds for random vectors on the sphere as well,
as follows.
\begin{lemma}[L\'evy's lemma; see [Ch.\ 1 of \cite{Led}]\label{T:Levy}
Let $(x_k)_{1\le k\le n}$ be a random vector, uniformly distributed on
$\mathbb{S}^{n-1}$, 
and let $F: \mathbb{R}^n \rightarrow \mathbb{R}$ be Lipschitz with Lipschitz constant $L$.  There are universal constants $C,c$ such that
$$\mathbb{P}[ | F(x_1,\ldots,x_n) - \mathbb{E} F(x_1,\ldots,x_n) | > t] \le Ce^{-cnt^2/L^2}.$$
\end{lemma}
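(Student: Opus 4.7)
The plan is to prove L\'evy's lemma via the classical spherical isoperimetric inequality, first obtaining concentration of $F$ around a median and then converting to concentration around the mean. Throughout, let $\sigma$ denote the uniform probability measure on $\mathbb{S}^{n-1}$ and equip the sphere with the (chordal) Euclidean metric inherited from $\mathbb{R}^n$.

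The first step is to invoke the L\'evy--Schmidt isoperimetric inequality on $\mathbb{S}^{n-1}$: among Borel subsets of given measure, spherical caps minimize the measure of their $\varepsilon$-neighborhoods. Combined with the standard explicit volume estimate for the complement of a spherical cap, this yields that whenever $A \subseteq \mathbb{S}^{n-1}$ satisfies $\sigma(A) \ge \tfrac12$,
$$\sigma(A_\varepsilon) \ge 1 - C e^{-c n \varepsilon^2},$$
where $A_\varepsilon := \{y \in \mathbb{S}^{n-1} : \inf_{a\in A}\|y-a\| < \varepsilon\}$. This is the fundamental black-box input; I would quote it from \cite{Led} rather than reprove it.

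Next, let $m$ be a median of $F$ on $\mathbb{S}^{n-1}$ and set $A := \{x \in \mathbb{S}^{n-1} : F(x) \le m\}$, so $\sigma(A) \ge \tfrac12$. For any $y \in A_{t/L}$ there is $a \in A$ with $\|y - a\| < t/L$, and the Lipschitz hypothesis gives $F(y) \le F(a) + L\cdot(t/L) \le m + t$, so $A_{t/L} \subseteq \{F \le m + t\}$. The isoperimetric bound above then yields $\sigma(F > m + t) \le C e^{-c n t^2/L^2}$, and the same argument applied to $-F$ gives the matching lower tail, so
$$\sigma\bigl(|F - m| > t\bigr) \le 2 C e^{-c n t^2 / L^2}.$$

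To pass from the median to the expectation, I would integrate the tail bound: $|\mathbb{E} F - m| \le \int_0^\infty \sigma(|F - m| > t)\, dt \le C' L/\sqrt{n}$. Then for $t \ge 2 C' L/\sqrt{n}$, the event $|F - \mathbb{E} F| > t$ forces $|F - m| > t/2$, and the stated bound follows after adjusting the universal constants; for $t$ smaller than this threshold the bound is made trivial by inflating $C$. This completes the proof modulo the isoperimetric input.

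The main obstacle is the isoperimetric inequality itself, whose usual proof proceeds by two-point (Schmidt) symmetrization or via a Gaussian limit argument and is non-trivial; once this is granted, the remainder is essentially bookkeeping. A fully self-contained alternative would be to bypass isoperimetry entirely: $\mathbb{S}^{n-1}$ has Ricci curvature $n-2$, so by Bakry--\'Emery it satisfies a logarithmic Sobolev inequality with constant of order $1/n$, and the Herbst argument applied to $\mathbb{E}[e^{\lambda(F - \mathbb{E} F)}]$ directly yields the claimed subgaussian tail with the correct dependence on $n$ and $L$. Either route produces the same inequality, and I would choose the isoperimetric one for its transparency and sharpness of constants.
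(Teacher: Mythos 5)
The paper does not prove this lemma at all: it is quoted verbatim as a standard black box from Ch.~1 of \cite{Led}, which is precisely where the isoperimetry-plus-median argument you outline is carried out. Your proof is the standard one and is correct, including the two points that are easy to fumble --- passing from the median to the mean by integrating the tail to get $|\E F - m| \le C'L/\sqrt{n}$, and absorbing the regime $t \lesssim L/\sqrt{n}$ by inflating the universal constant $C$ so the bound is trivially at least $1$ there.
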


The following modified version of Theorem \ref{T:random-dist-to-gauss} holds
for the spherical model; here we compare $\mu_n$ to
$\mu_n^{DOS}$ rather than to the Gaussian distribution.  
\begin{thm}\label{T:random-dist-to-gauss-sphere}
Let $\mu_n$ be the spectral measure of $H_n$ and let
$\mu_n^{DOS}:=\E\mu_n$ be the density of states measure.  There are universal constants $C$, $C'$ and $c$ such that
\begin{enumerate}[label=(\alph*)]
\item \label{P:exp-dist-to-gauss}\(\displaystyle\E d_{BL}(\mu_n,\mu_n^{DOS})\le \frac{C}{n^{1/6}};\)
\item \label{P:dist-conc-gauss}\(\displaystyle\P\left[d_{BL}(\mu_n,\mu_n^{DOS})\ge\frac{C}{n^{1/6}}+t\right]\le Ce^{-cnt^2};\)
\end{enumerate}
and
\begin{enumerate}[resume,label=(\alph*)]
\item \label{P:as-dist-to-gauss}with probability 1, for all sufficiently large $n$,
\[d_{BL}(\mu_n,\mu_n^{DOS})\le \frac{C'}{n^{1/6}}.\]
\end{enumerate}

\end{thm}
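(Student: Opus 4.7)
The plan is to follow the template of Theorem \ref{T:random-dist-to-gauss} as developed in Sections \ref{S:DOS-to-gamma}--\ref{S:avg-to-avg}, substituting L\'evy's lemma (Lemma \ref{T:Levy}) for Gaussian concentration throughout. The first step is to establish the analogue of Lemma \ref{T:dist-is-Lipschitz}. Without the $\frac{1}{3\sqrt{n}}$ normalization present in the original model, the same Hoffman--Wielandt and orthogonality computation yields $\|H_n({\bf x})-H_n({\bf x}')\|_{H.S.}=2^{n/2}\|{\bf x}-{\bf x}'\|$, so ${\bf x}\mapsto H_n({\bf x})$ is $2^{n/2}$-Lipschitz; composing with the $2^{-n/2}$-Lipschitz maps $H_n\mapsto\int fd\mu_n$ and $H_n\mapsto d_{BL}(\mu_n,\rho)$ from the original lemma (these depend only on eigenvalues and Hoffman--Wielandt and are therefore unaffected by the switch from Gaussian to spherical coefficients), both maps become $1$-Lipschitz on $\mathbb{S}^{9n-1}$ whenever $\|f\|_{BL}\le 1$. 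Applying Lemma \ref{T:Levy} in dimension $9n$ then gives the very same sub-Gaussian increment estimate
\[
\P\left[|X_f-X_g|>t\right]\le Ce^{-cnt^2/\|f-g\|_{BL}^2},
\]
where $X_f:=\int fd\mu_n-\int fd\mu_n^{DOS}$, matching \eqref{E:sub-Gauss} exactly. This immediately upgrades part (a) to part (b), and part (b) to part (c) via the first Borel--Cantelli lemma.

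It remains to prove part (a), for which I would run the Dudley entropy argument of Section \ref{S:avg-to-avg} essentially unchanged. The two inputs needed are the sub-Gaussian condition just established and a truncation bound on $\|H_n\|_{op}$. For the latter, ${\bf x}\mapsto\|H_n\|_{op}$ is $\sqrt{9n}$-Lipschitz (by the triangle inequality and the fact that each $\sigma_j^{(a)}\sigma_{j+1}^{(b)}$ has operator norm $1$), while
\[
\E\|H_n\|_{op}\le\sum_{a,b,j}\E|x_{a,b,j}|=O(\sqrt{n}),
\]
since a single coordinate of a uniform vector on $\mathbb{S}^{9n-1}$ has second moment $\frac{1}{9n}$. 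L\'evy's lemma then yields $\P[\|H_n\|_{op}>R]\le Ce^{-cR^2}$ for $R\ge C\sqrt{n}$, exactly as in the Gaussian case. From this point, the truncation to test functions supported on $[-2R,2R]$, the piecewise linear approximation with $m$ pieces, and Dudley's bound (Proposition \ref{T:dudley}) combine verbatim to give
\[
\E d_{BL}(\mu_n,\mu_n^{DOS})\le K\sqrt{\frac{m}{n}}+\frac{4R}{m}+Ce^{-cR^2},
\]
and the choices $R\sim\sqrt{n}$ and $m=n^{2/3}$ produce the rate $n^{-1/6}$.

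The main obstacle is really just bookkeeping: confirming that the geometric concentration on $\mathbb{S}^{9n-1}$ is quantitatively interchangeable with Gaussian concentration on $\R^{9n}$ at every step. The extra factor of $n$ in L\'evy's exponent is precisely compensated by the absence of the $\frac{1}{\sqrt{n}}$ normalization in the definition of $H_n$, and the slightly weaker single-coordinate concentration (variance $\frac{1}{9n}$ instead of $1$) is compensated likewise in the operator norm estimate. Once these constants are verified, no genuinely new estimate is needed, and the argument of Section \ref{S:avg-to-avg} transfers without modification.
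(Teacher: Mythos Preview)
Your proposal is correct and follows essentially the same approach as the paper: the paper simply remarks that all the proofs in Section~\ref{S:avg-to-avg} go through identically with L\'evy's lemma replacing Gaussian concentration, once one accounts for the normalization difference between a unit-sphere vector and a standard Gaussian. Your write-up is a careful explicit verification of exactly this claim, checking that the missing $\frac{1}{3\sqrt{n}}$ in the definition of $H_n$ is compensated by the extra factor of $n$ in L\'evy's exponent, both in the sub-Gaussian increment bound and in the operator-norm tail estimate.
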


All of the proofs in Section \ref{S:avg-to-avg} go through in exactly
the same way, using L\'evy's lemma in place of Proposition
\ref{T:Gaussian-concentration}.   (Note the difference in normalization:
Proposition \ref{T:Gaussian-concentration} is stated for a standard
Gaussian random vector, with expected length on the order of
$\sqrt{n}$, whereas L\'evy's lemma is stated for a random vector on
the unit sphere.)  The missing element in showing the almost sure
convergence of $\mu_n$ to the Gaussian distribution is the comparison
of $\mu_n^{DOS}$ to Gaussian, which in the case of i.i.d.\ Gaussian
coefficients in $H_n$ followed from the characteristic
function estimate proved in \cite{KLW}.  The following result gives an analog
of their result in for the spherical model, but without a similarly
good rate of convergence; this is the reason that we do not obtain an
almost sure convergence rate of $\mu_n$ in the spherical model.

\begin{prop}\label{T:doscf-gaussian-sphere}
Let $\psi_n(t)$ denote the characteristic function of the density of
states measure of $H_n$ as defined in Equation \eqref{E:H_n-sphere}.
Then for each $t\in\R$,
\[\psi_n(t)\xrightarrow{n\to\infty}e^{-\frac{t^2}{2}}.\]
\end{prop}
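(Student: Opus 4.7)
The plan is to couple the spherical coefficient vector to the i.i.d.\ Gaussian one via polar decomposition, so that the spherical and Gaussian density-of-states characteristic functions are related by a random rescaling of $t$, and then to combine the Keating--Linden--Wells estimate \eqref{E:ft-diff-bound} with an $n$-uniform Lipschitz bound on $\psi_n$.

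First I would write the standard Gaussian vector $\mathbf{Z} = (Z_{a,b,j}) \in \R^{9n}$ as $\mathbf{Z} = r\mathbf{x}$ with $r := \|\mathbf{Z}\|$ and $\mathbf{x}$ uniformly distributed on the unit sphere in $\R^{9n}$, independently of $r$.  Setting $H_n^G := (9n)^{-1/2}\sum_{a,b,j} Z_{a,b,j}\sigma_j^{(a)}\sigma_{j+1}^{(b)}$ for the Hamiltonian of \eqref{D:H_n}, and letting $H_n$ be the spherical Hamiltonian of \eqref{E:H_n-sphere}, one has $H_n^G = s_n H_n$ with $s_n := r/\sqrt{9n}$ independent of $\mathbf{x}$.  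Writing $\psi_n^G$ for the characteristic function of the Gaussian density of states, this yields the identity
\[\psi_n^G(t) = \E\left[\psi_n(t s_n)\right].\]

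The key step is to show that $\psi_n$ is $1$-Lipschitz in $t$, uniformly in $n$.  Differentiating under the expectation and applying Jensen's inequality to the normalized counting measure on the spectrum of $H_n$ gives
\[|\psi_n'(t)| \le \E\left[\frac{1}{2^n}\sum_{j=1}^{2^n}|\lambda_j|\right] \le \E\sqrt{2^{-n}\tr(H_n^2)}.\]
By the orthogonality computation already used in the proof of Lemma \ref{T:dist-is-Lipschitz}, all off-diagonal contributions to $\tr(H_n^2)$ vanish, leaving $\tr(H_n^2) = 2^n\|\mathbf{x}\|^2 = 2^n$ pointwise on the sphere; hence $|\psi_n'(t)| \le 1$ for every $n$ and $t$.

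Combining these ingredients,
\[|\psi_n(t) - \psi_n^G(t)| = \left|\E[\psi_n(t) - \psi_n(ts_n)]\right| \le |t|\,\E|1 - s_n| \le |t|\sqrt{\E(1-s_n)^2}.\]
Since $r^2 \sim \chi^2_{9n}$ has mean $9n$ and variance $18n$, a short direct computation gives $\E(1-s_n)^2 = O(1/n)$, so $|\psi_n(t) - \psi_n^G(t)| = O(|t|/\sqrt{n})$.  The triangle inequality and the Keating--Linden--Wells bound \eqref{E:ft-diff-bound} then give
\[|\psi_n(t) - e^{-t^2/2}| \le \frac{C(t^2 + |t|)}{\sqrt{n}},\]
which tends to $0$ for each fixed $t$.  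The only nontrivial step is the $n$-uniform Lipschitz bound on $\psi_n$, which in turn reduces to the trace identity for products of four nearest-neighbor Pauli operators already established in Section \ref{S:avg-to-avg}; everything else is a coupling argument plus a routine chi-moment estimate.
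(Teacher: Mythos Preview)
Your argument is correct and takes a genuinely different route from the paper's proof.  The paper follows the Keating--Linden--Wells combinatorial argument directly in the spherical setting: it splits $H_n$ into commuting pieces, expands the matrix exponentials as $\cos(tx_{a,b,j})I+\sin(tx_{a,b,j})\sigma_j^{(a)}\sigma_{j+1}^{(b)}$, uses symmetry to kill sine terms, and is then left with $\E\prod_{k}\cos(tx_k)$ over the sphere, whose asymptotics are handled by the dedicated Lemma~\ref{T:cosines-average} via explicit spherical moment formulas.  You instead recycle the Gaussian result wholesale through the polar decomposition $\mathbf{Z}=r\mathbf{x}$, yielding $\psi_n^G(t)=\E[\psi_n(ts_n)]$, and then control the rescaling error using only the uniform Lipschitz bound $|\psi_n'(t)|\le 1$, which follows immediately from the trace identity $\tr(H_n^2)=2^n\|\mathbf{x}\|^2=2^n$ already established in Lemma~\ref{T:dist-is-Lipschitz}.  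This is both shorter (it avoids Lemma~\ref{T:cosines-average} entirely) and strictly stronger: you obtain a quantitative bound $|\psi_n(t)-e^{-t^2/2}|\le C(t^2+|t|)/\sqrt{n}$, whereas the paper's proof yields only pointwise convergence with no rate, a limitation the paper explicitly notes as the obstruction to obtaining an almost-sure convergence rate in the spherical model.  The trade-off is that the paper's approach is self-contained for the spherical model, while yours leans on the full strength of the Gaussian estimate \eqref{E:ft-diff-bound} as a black box.
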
 
To modify the approach in \cite{KLW} to prove Proposition \ref{T:doscf-gaussian-sphere}, we will need to calculate
expectations of certain functions over the unit sphere; the following
lemma gives explicit formulae.
\begin{lemma}[See \cite{Fol}]\label{T:sph-ints}
Let $P(x)=|x_1|^{\alpha_1}|x_2|^{\alpha_2}\cdots|x_n|^{\alpha_n}$.  
Then if $X$ is uniformly distributed on $S^{n-1}$, 
$$\E\big[P(X)\big]=\frac{\Gamma(\beta_1)\cdots\Gamma(\beta_n)\Gamma(\frac{n}{2})}{\Gamma(\beta_1+\cdots+\beta_n)\pi^{n/2}},$$
where $\beta_i=\frac{1}{2}(\alpha_i+1)$ for $1\le i\le n$ and 
$$\Gamma(t)=\int_0^\infty s^{t-1}e^{-s}ds=2\int_0^\infty r^{2t-1}e^{-r^2}dr.$$
\end{lemma}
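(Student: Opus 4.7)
The plan is to evaluate $\E[P(X)]$ by the standard Gaussian polar-coordinates trick: realize the uniform measure on $S^{n-1}$ as the radial projection of a standard Gaussian vector, and exploit independence of the radial and angular parts together with homogeneity of $P$.

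Concretely, let $Z=(Z_1,\ldots,Z_n)$ be a standard Gaussian vector in $\R^n$. It is a classical fact (easily verified from the rotational invariance of the Gaussian density) that $X:=Z/\|Z\|$ is uniformly distributed on $S^{n-1}$ and is independent of $\|Z\|$. Since $P$ is positively homogeneous of degree $s:=\alpha_1+\cdots+\alpha_n$, we have $P(Z)=\|Z\|^s P(X)$, so independence gives
\[
\E[P(Z)] \;=\; \E\bigl[\|Z\|^s\bigr]\,\E[P(X)],
\]
and hence $\E[P(X)] = \E[P(Z)]/\E[\|Z\|^s]$.

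For the numerator, I would use independence of the coordinates of $Z$ to get $\E[P(Z)]=\prod_{i=1}^n \E[|Z_i|^{\alpha_i}]$, and then evaluate each one-dimensional integral by the change of variable $u=z^2/2$:
\[
\E[|Z_i|^{\alpha_i}] \;=\; \frac{2}{\sqrt{2\pi}}\int_0^\infty z^{\alpha_i}e^{-z^2/2}\,dz \;=\; \frac{2^{\alpha_i/2}}{\sqrt{\pi}}\,\Gamma(\beta_i),
\]
so that $\E[P(Z)] = \pi^{-n/2}\,2^{s/2}\,\prod_{i=1}^n\Gamma(\beta_i)$.

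For the denominator, I would use that $\|Z\|^2$ is a chi-squared random variable with $n$ degrees of freedom, equivalently $\|Z\|^2/2\sim\mathrm{Gamma}(n/2,1)$, giving
\[
\E\bigl[\|Z\|^s\bigr] \;=\; 2^{s/2}\,\frac{\Gamma\bigl(\tfrac{n+s}{2}\bigr)}{\Gamma\bigl(\tfrac{n}{2}\bigr)}.
\]
The key bookkeeping is the identity $\beta_1+\cdots+\beta_n=\frac{s+n}{2}$, which matches the argument of the Gamma function in the denominator with the required $\Gamma(\beta_1+\cdots+\beta_n)$. Dividing $\E[P(Z)]$ by $\E[\|Z\|^s]$, the factors $2^{s/2}$ cancel and one lands on the stated expression. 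There is no real obstacle here: the only things to check carefully are the change-of-variable in the univariate Gaussian moment computation and the sum identity for the $\beta_i$, both of which are routine. The argument also makes transparent why the identity requires only $\alpha_i>-1$ (so that each $\Gamma(\beta_i)$ converges), although this hypothesis is harmless in the applications to sphere integrals appearing in the present paper.
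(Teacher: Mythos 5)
Your argument is correct, and it is essentially the standard proof of this identity: the paper itself gives no proof but defers to Folland's note, whose argument is exactly the Gaussian/polar-coordinates factorization you use (computing $\int_{\R^n}P(x)e^{-\|x\|^2}dx$ both as a product of one-dimensional integrals and as a radial integral times the spherical average), merely phrased there in terms of integrals rather than in terms of independence of $Z/\|Z\|$ and $\|Z\|$. All of your intermediate computations (the univariate Gaussian moment, the chi-squared moment, and the identity $\sum_i\beta_i=\tfrac{s+n}{2}$) check out.
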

The crucial technical ingredient for Proposition
\ref{T:doscf-gaussian-sphere} is the following.
\begin{lemma}\label{T:cosines-average}
Let $\{x_k\}_{1\le k\le N}$ be uniformly distributed on the unit
sphere in $\R^N$.  For each $t\in\R$,
\[\E\left[\prod_{k=1}^N\cos(tx_k)\right]\xrightarrow{N\to\infty}e^{-t^2}.\]
\end{lemma}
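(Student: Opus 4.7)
My proposal is to use rotational invariance of the uniform distribution on $S^{N-1}$ to collapse the product to the expectation of a single cosine, and then apply the classical fact that coordinates of a random vector on $S^{N-1}$, rescaled by $\sqrt{N}$, converge to a standard Gaussian.

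The starting point is the elementary identity
\[\prod_{k=1}^N\cos(a_k)=\frac{1}{2^N}\sum_{\epsilon\in\{-1,+1\}^N}\cos\!\Bigl(\sum_{k=1}^N\epsilon_k a_k\Bigr),\]
which is immediate from $\cos(a_k)=\tfrac12(e^{ia_k}+e^{-ia_k})$, expanding the product, and noting that the imaginary parts cancel in pairs $\epsilon\leftrightarrow -\epsilon$. Setting $a_k=tx_k$ and taking expectations gives
\[\E\!\Bigl[\prod_{k=1}^N\cos(tx_k)\Bigr]=\frac{1}{2^N}\sum_{\epsilon\in\{-1,+1\}^N}\E\cos\!\bigl(t\langle\epsilon,x\rangle\bigr).\]

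By the rotational invariance of the uniform distribution on $S^{N-1}$, $\langle u,x\rangle$ has the same distribution as $\|u\|\,x_1$ for every nonzero $u\in\R^N$. Since $\|\epsilon\|=\sqrt{N}$ for every sign vector $\epsilon\in\{-1,+1\}^N$, all $2^N$ summands above are equal, and the identity collapses to
\[\E\!\Bigl[\prod_{k=1}^N\cos(tx_k)\Bigr]=\E\cos\bigl(t\sqrt{N}\,x_1\bigr).\]

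It then remains to show that $\sqrt{N}\,x_1$ converges in distribution to a standard Gaussian as $N\to\infty$. This is classical and follows, e.g., from the Gaussian representation $x=(Z_1,\ldots,Z_N)/\|Z\|$ with $(Z_k)$ i.i.d.\ $N(0,1)$ together with $\|Z\|/\sqrt{N}\to 1$ a.s.\ (Slutsky), or directly by computing moments with Lemma \ref{T:sph-ints}: $\E(\sqrt{N}x_1)^{2m}=N^m\Gamma(m+\tfrac12)\Gamma(N/2)/[\sqrt\pi\,\Gamma(m+N/2)]\to (2m)!/(2^m m!)$, matching the Gaussian moments. Since $u\mapsto\cos(tu)$ is bounded and continuous, the portmanteau theorem then yields the desired exponential limit. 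The main obstacle is really only spotting the sign-vector identity at the outset; once it is in hand, the proof is essentially three lines. A computationally heavier alternative would be to expand both cosines and the spherical moments via Lemma \ref{T:sph-ints}, collapse the resulting multi-index sum via the identity $\sum_{\vec m:\sum_k m_k=M}\prod_k(m_k!)^{-1}=N^M/M!$, and pass termwise to the limit in the resulting Gamma-function series; but this route merely reconstructs the same answer the more laboriously.
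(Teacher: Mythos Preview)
Your proof is correct and substantially cleaner than the paper's. The paper proceeds by replacing each $\cos(tx_k)$ with its second-order Taylor polynomial $1-\tfrac{(tx_k)^2}{2}$, controlling the replacement error via L\'evy's lemma, Cauchy--Schwarz, and the moment bounds of Lemma~\ref{T:sph-ints}, and then expanding $\E\prod_k\bigl(1-\tfrac{(tx_k)^2}{2}\bigr)$ explicitly as a series in spherical moments whose terms are shown one by one to converge to those of the exponential series. Your route bypasses all of this machinery: the product-to-sum identity together with rotational invariance collapses the entire $N$-fold product to the single quantity $\E\cos\bigl(t\sqrt{N}\,x_1\bigr)$, after which the limit follows immediately from the classical fact $\sqrt{N}\,x_1\Rightarrow N(0,1)$. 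The paper's explicit expansion could in principle be mined for a quantitative rate, but your identity is equally amenable to this, since the law of a single spherical coordinate has an explicit density and its characteristic function can be analyzed directly.

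One point worth stating explicitly: both your argument and the paper's own proof actually yield the limit $e^{-t^2/2}$, not $e^{-t^2}$ as printed in the lemma. Indeed, $\E\cos(tZ)=e^{-t^2/2}$ for $Z\sim N(0,1)$, the final displayed estimate in the paper's proof compares against $e^{-t^2/2}$, and the subsequent application of the lemma in the proof of Proposition~\ref{T:doscf-gaussian-sphere} also uses $e^{-t^2/2}$. The exponent in the statement is a typo; your ``desired exponential limit'' should be read accordingly.
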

\begin{proof}We first show that it suffices to approximate the cosine with a
second-order Taylor expansion.  It follows from the L\'evy's lemma and the fact that $|\cos(tx_k)|\le 1$ that
\begin{equation}\begin{split}\label{E:restrict-to-cube}
\left|\E\prod_{k=1}^N\cos(tx_k)-\E\prod_{k=1}^N\cos(tx_k)\ind{\left|1-\frac{(tx_k)^2}{2}\right|\le
    1}\right|&=\left|\E\left[\prod_{k=1}^N\cos(tx_k)\ind{\bigcup_{k=1}^N\left\{\left|1-\frac{(tx_k)^2}{2}\right|>
      1\right\}}\right]\right|\\&
\le\P\left[\bigcup_{k=1}^N\left\{\left|1-\frac{(tx_k)^2}{2}\right|>
      1\right\}\right]\\&=\P\left[\bigcup_{k=1}^N\left\{\frac{(tx_k)^2}{2}>
      2\right\}\right]\le CNe^{-\frac{cN}{t^2}}.
\end{split}\end{equation}
From the trivial estimate that $|z_1\cdots z_m-w_1\cdots
w_m|\le\sum_{k=1}^m|z_k-w_k|$ if $|z_k|,|w_k|\le 1$ for all $k$, it
then follows that 
\begin{equation}\begin{split}\label{E:Taylor-replace}
\left|\E\prod_{k=1}^N\left\{\cos(tx_k)-1+\frac{(tx_k)^2}{2}\right\}\ind{\left|1-\frac{(tx_k)^2}{2}\right|\le
    1}\right|&\le\sum_{k=1}^N\E\left|\cos(tx_k)-1+\frac{(tx_k)^2}{2}\right|\ind{\left|1-\frac{(tx_k)^2}{2}\right|\le
    1}\\&\le \frac{N\E(tx_1)^4}{4!}\le\frac{t^4}{N\cdot 4!},
\end{split}\end{equation}
where the last line follows from Lemma \ref{T:sph-ints}.
Now, by the Cauchy-Schwarz inequality,
\begin{equation}\begin{split}\label{E:unrestrict-cs}
&\left|\E\prod_{k=1}^N\left\{1-\frac{(tx_k)^2}{2}\right\}\ind{\left|1-\frac{(tx_k)^2}{2}\right|\le
  1}-\E\prod_{k=1}^N\left\{1-\frac{(tx_k)^2}{2}\right\}\right|\\&\qquad\qquad\le\sqrt{ \E\prod_{k=1}^N\left(1-\frac{(tx_k)^2}{2}\right)^2}\sqrt{\P\left[\bigcup_{k=1}^N\left\{\left|1-\frac{(tx_k)^2}{2}\right|>
      1\right\}\right]}\\&\qquad\qquad\le Ce^{-\frac{cN}{t^2}}\sqrt{ \E\prod_{k=1}^N\left(1-(tx_k)^2+\frac{(tx_k)^4}{4}\right)}.
\end{split}\end{equation}
Expanding this last expression and using Lemma \ref{T:sph-ints} gives
that
\begin{equation}\begin{split}\label{E:unrestrict-bound}
\E\prod_{k=1}^N\left(1-(tx_k)^2+\frac{(tx_k)^4}{4}\right)&=\sum_{j=0}^N\binom{N}{j}\sum_{\ell=0}^{N-j}\binom{N-j}{\ell}\frac{(-t)^{2j}t^{4\ell}}{4^\ell}\E[x_1^2\cdots
x_j^2x_{j+1}^4\cdots x_{j+\ell}^4]\\&\le\sum_{j=0}^N\binom{N}{j}\left(\frac{-t^2}{N}\right)^j\sum_{\ell=0}^{N-j}\binom{N-j}{\ell}\left(\frac{3t^4}{4N^2}\right)^\ell\\&=\sum_{j=0}^N\binom{N}{j}\left(\frac{-t^2}{N}\right)^j\left(1+\frac{3t^4}{4N^2}\right)^{N-j}\\&=\left(1-\frac{t^2}{N}+\frac{3t^4}{4N^2}\right)^{N}.
\end{split}\end{equation}
Since this last expression is asymptotic to $e^{-t^2}$, it is in
particular 
bounded.  Combining equations \eqref{E:restrict-to-cube},
\eqref{E:Taylor-replace}, \eqref{E:unrestrict-cs}, and
\eqref{E:unrestrict-bound} gives that 
\begin{equation}\label{E:real-Taylor-replace}
\left|\E\prod_{k=1}^N\cos(tx_k)-\E\prod_{k=1}^N\left(1-\frac{(tx_k)^2}{2}\right)\right|\le \frac{Ct^4}{N},
\end{equation}
and it remains to analyze $\E\prod_{k=1}^N\left(1-\frac{(tx_k)^2}{2}\right)$.

\medskip

By Lemma \ref{T:sph-ints},
\begin{equation*}\begin{split}\mathbb{E}\Big[ \prod_{k=1}^N \Big(1-\frac{(t x_k)^2}{2}\Big)\Big]&=
 \sum_{k=0}^N {N \choose k} \Big(\frac{-t^2}{2} \Big)^k \frac{\Gamma(\frac{N}{2})}{2^k \Gamma(\frac{N}{2}+k)}\\&=
  \sum_{k=0}^N \frac{(\frac{-t^2}{2} )^k}{k!} \frac{N(N-1)\cdots(N-k+1)}{(N+2k-2)(N+2k-4)\cdots(N)}\\&=\sum_{k=0}^N \frac{(\frac{-t^2}{2} )^k}{k!} \left[\frac{\prod_{\ell=1}^{k-1}\left(1-\frac{\ell}{N}\right)}{\prod_{\ell=1}^{k-1}\left(1+\frac{2\ell}{N}\right)}\right].
\end{split}\end{equation*}
Clearly 
\[\frac{\prod_{\ell=1}^{k-1}\left(1-\frac{\ell}{N}\right)}{\prod_{\ell=1}^{k-1}\left(1+\frac{2\ell}{N}\right)}\le
1,\]
and applying Taylor's theorem to the logarithms gives that there is a
constant $C$ such that
\begin{equation*}\begin{split}
\frac{\prod_{\ell=1}^{k-1}\left(1-\frac{\ell}{N}\right)}{\prod_{\ell=1}^{k-1}\left(1+\frac{2\ell}{N}\right)}&=\exp\left(\sum_{\ell=1}^{k-1}\log\left(1-\frac{\ell}{n}\right)-\log\left(1+\frac{2\ell}{N}\right)\right)\\&\ge\exp\left(-\sum_{\ell=1}^{k-1}\left(\frac{3\ell}{N}+\frac{C\ell^2}{N^2}\right)\right)
\\&\ge\exp\left(-\frac{3k(k-1)}{2N}-\frac{Ck^3}{N^2}\right).\end{split}\end{equation*}
It follows that for any $m\le N$,
\begin{equation*}\begin{split}
\left|\mathbb{E}\prod_{k=1}^N \left(1-\frac{(t
      x_k)^2}{2}\right)-e^{-\frac{t^2}{2}}\right|&\le\sum_{k=0}^m
\frac{(\frac{t^2}{2} )^k}{k!}
\left[1-\exp\left(-\frac{3k(k-1)}{2N}-\frac{Ck^3}{N^2}\right)\right]+\sum_{k=m+1}^\infty
\frac{(\frac{t^2}{2} )^k}{k!}\\&\le
e^{\frac{t^2}{2}}\left(\frac{3m(m-1)}{2N}+\frac{Cm^3}{N^2}\right)+\frac{\left(\frac{t^2}{2}\right)^{m+1}}{(m+1)!}\\&\le \frac{Ce^{\frac{t^2}{2}}m^2}{N}+\frac{C}{\sqrt{m}}\left(\frac{et^2}{2(m+1)}\right)^{m+1}.
\end{split}\end{equation*}

Choosing, say,  $m=\left\lceil N^{1/4}\right\rceil$ completes the proof.
\end{proof}

\begin{proof}[Proof of Proposition \ref{T:doscf-gaussian-sphere}]
The proof is a straightforward modification of the one in \cite{KLW},
making use of Lemma \ref{T:cosines-average} instead of the corresponding computation
for i.i.d.\ Gaussian coefficients; below are the details for the
necessary modifications, with the part of the proof which is identical
to that of \cite{KLW} omitted.

Suppose that $n$ is even, and make the definitions
\[A:=\sum_{\substack{j=1\\j\,\text{even}}}^n\sum_{a,b=1}^3x_{a,b,j}\sigma^{(a)}_j\sigma^{(b)}_{j+1}\qquad
B:=\sum_{\substack{j=1\\j\,\text{odd}}}^n\sum_{a,b=1}^3x_{a,b,j}\sigma^{(a)}_j\sigma^{(b)}_{j+1}\]
\[A_{3(b-1)+a}:=\sum_{\substack{j=1\\j\,\text{even}}}^nx_{a,b,j}\sigma^{(a)}_j\sigma^{(b)}_{j+1}\qquad
B_{3(b-1)+a}:=\sum_{\substack{j=1\\j\,\text{odd}}}^nx_{a,b,j}\sigma^{(a)}_j\sigma^{(b)}_{j+1}.\]
Then the terms within each sum of each of the $A_k$ and $B_k$ commute,
and 
\[H_n=A+B=\sum_{k=1}^{9}(A_k+B_k).\]
Define 
\[\phi_n(t):=\E\left[\frac{1}{2^n}\tr\left(\prod_{k=1}^9e^{itA_k}e^{itB_k}\right)\right].\]
Observe that since all of the terms within the $A_k$ and $B_k$
commute, 
\[e^{itA_k}=\prod_{\substack{j=1\\j\text{even}}}^ne^{itx_{a,b,j}\sigma^{(a)}_j\sigma^{(b)}_{j+1}}\qquad\qquad
e^{itB_k}=\prod_{\substack{j=1\\j\text{odd}}}^ne^{itx_{a,b,j}\sigma^{(a)}_j\sigma^{(b)}_{j+1}},\]
where $k=3(b-1)+a$.
Now, since the square of any of the Pauli matrices is the identity, it
follows from the definition of the matrix exponential in terms of
power series that
\[e^{itx_{a,b,j}\sigma^{(a)}_j\sigma^{(b)}_{j+1}}=\cos(tx_{a,b,j})I_{2^n}+\sin(tx_{a,b,j})\sigma^{(a)}_j\sigma^{(b)}_{j+1}.\]
By the symmetry of the uniform distribution on $\mathbb{S}^{9n-1}$,
any term with sine factors in the expansion of $\phi_n(t)$ has vanishing
expectation, and so by Lemma \ref{T:cosines-average},
\[\phi_n(t)=\E\left[\prod_{j=1}^n\prod_{a,b=1}^3\cos(tx_k)\right]\xrightarrow{n\to\infty}e^{-\frac{t^2}{2}}.\]

At this point the proof can be completed essentially identically to
the proof in \cite{KLW}.
\end{proof}

From Proposition \ref{T:doscf-gaussian-sphere}, it follows that
$\mu_n^{DOS}$ converges weakly to the standard Gaussian distribution;
since the bounded-Lipschitz distance is a metric for weak convergence,
this means that 
\[\lim_{n\to\infty}d_{BL}(\mu_n^{DOS},\gamma)=0.\]
It follows from part \ref{P:as-dist-to-gauss} of Theorem
\ref{T:random-dist-to-gauss-sphere} that 
\[\lim_{n\to\infty}d_{BL}(\mu_n,\mu_n^{DOS})=0\]
almost surely, and so Theorem \ref{T:sphere-as-convergence} follows.

\section{Concluding remarks}\label{S:remarks}
\begin{enumerate}[label=\arabic*. ,leftmargin=.2in]
\item In \cite{KLW}, the authors consider more general distributional assumptions on the coefficients $\{Z_{a,b,j}\}$.  Specifically, they consider the random matrices
\[H_n:=\sum_{a,b=1}^3\sum_{j=1}^n\alpha_{a,b,j}\sigma^{(a)}_j\sigma^{(b)}_{j+1},\]
where the $\{\alpha_{a,b,j}\}$ are assumed to be independent and symmetric about 0, with
\begin{equation}\label{E:lyapounov-conditions}
\sum_{a,b=1}^3\sum_{j=1}^n\E\alpha_{a,b,j}^2=1\qquad\lim_{n\to\infty}\sum_{a,b=1}^3\sum_{j=1}^n\E|\alpha_{a,b,j}|^{2+\delta}=0\qquad\max_{a,b,j}\E|\alpha_{a,b,j}|^2=o\left(\frac{1}{\sqrt{n}}\right),\end{equation}
for some $\delta>0$.  The point is that these are the conditions on the $\alpha_{a,b,j}$ under which the Lyapounov central limit holds, which allows the authors to show that the pointwise difference between the characteristic function of $\mu_n^{DOS}$ and that of the Gaussian distribution still tends to zero.  
In order to obtain the same rates of convergence as in the Gaussian case, slightly stronger assumptions on the rate of growth of moments are needed; however, without further assumptions, the concentration arguments used to move to almost sure convergence need not apply.

A probability measure $\nu$ is said to satisfy a quadratic transportation cost inequality with constant $a$ if for all probability measures $\mu$ absolutely continuous with respect to $\nu$,
\begin{equation}\label{E:quad-trans-cost}
W_2(\mu,\nu)\le\sqrt{aH(\mu\big\|\nu)},
\end{equation}
where $W_2$ is the $L_2$-Kantorivich distance and $H(\mu\big\|\nu)$ is  the relative entropy (or Kullback-Leibler divergence) of $\mu$ with respect to $\nu$. 
If instead of \eqref{E:lyapounov-conditions}, one assumes that the distribution of each $\alpha_{a,b,j}$  satisfies a quadratic transportation cost inequality with the same constant $a$, then one can carry out the entire program used here in the Gaussian case with essentially no modification, and one again obtains the almost sure convergence of the spectral measure to Gaussian, with a rate of $\frac{1}{n^{1/6}}$ in $W_1$-distance. 
The assumption that the coefficients $\alpha_{a,b,j}$ all share this property is the most general setting of independent $\alpha_{a,b,j}$ in which the arguments using concentration of measure can be carried out (see \cite{Goz} for a detailed discussion).

\item In \cite{ES}, Erd\H{o}s and Schr\"oder introduced a model for quantum spin glasses with arbitrary coupling geometry.
Given a sequence of undirected graphs $\Gamma_n$ on the vertex sets $\{  1, \ldots, n \}$ they considered Hermitian random matrices defined by
$$
H_n^{\Gamma_n} (\bold{x}) = \frac{1}{3 \sqrt{e(\Gamma_n)}} \sum_{(ij)\in \Gamma_n} \sum_{a,b=1}^3 \alpha_{a,b,(ij)}
\sigma_i^{(a)} \sigma_j^{(b)},
$$
where the $\alpha_{a,b,(ij)}$ are are assumed to be independent centered random variables with unit variance, and $e(\Gamma_n)$ denotes the number of edges in $\Gamma_n$.
They proved weak convergence of the density of states measure for this model to the standard Gaussian distribution whenever the maximal degree of a vertex in $\Gamma_n$ is negligible in comparison to $e(\Gamma_n)$

A tail bound similar to Theorem \ref{T:random-dist-to-gauss}, part \ref{P:dist-conc-gauss} can be obtained for this model if the coefficients are assumed to be standard Gaussian random variables.
Using a proof identical to the proof of Lemma 9, one can show that for $\bold{x} =\{ x_{a,b,(ij)} \in \mathbb{R}^{9e(\Gamma_n)} \}$, the map 
$$\bold{x} \rightarrow d_{BL}(\mu_n, \rho)$$
is $\frac{1}{3 \sqrt{e(\Gamma_n)}}$-Lipschitz for any probability measure $\rho$. It then follows from the concentration of measure for standard Gaussian random variables that
$$
\mathbb{P}[|d_{BL}(\mu_n, \rho) - \mathbb{E} d_{BL} (\mu_n, \rho)| > t] \le C e^{- c e(\Gamma_n) t^2}.
$$

Applying this estimate in particular when $\rho$ is the density of states measure, the
almost sure convergence of $\mu_n$ to the standard Gaussian distribution follows from the Borel-Cantelli Lemma and the convergence proved in \cite{ES}.

Erd\H{os} and Schr\"oder also described a model for the Hamiltonian of a quantum $p$-spin glasses: 
$$
H_n^{(p_n-glass)}= 3^{-p_n/2} {n \choose p_n}^{-1/2} 
\sum_{1 \le i_1 < \ldots< i_{p_n} \le n} \sum_{a_1, \ldots, a_{p_n} =1}^3 
\alpha_{a_1,\ldots,a_{p_n},(i_1 \ldots i_{p_n})} \sigma_{i_1}^{(a_1)} \ldots  \sigma_{i_{p_n}}^{(a_{p_n})}.
$$

They found a sharp phase transition at the threshold $p=\sqrt{n}$ between  the  standard Gaussian distribution and the Wigner semicircle law, with an explicitly described limiting measure at criticality.

Using the same arguments as above, it can be shown that for 
$\bold{x} =\{ x_{a,b,(ij)} \in \mathbb{R}^{9e(\Gamma_n)} \}$, the map 
$$\bold{x} \rightarrow d_{BL}(\mu_n, \rho)$$
is $ 3^{-p/2} {n \choose p}^{-1/2} $-Lipschitz for any probability measure $\rho$. 
Thus, if $p_n<<\sqrt{n}$ the empirical spectral measure converges almost surely to the standard Gaussian distribution and if $p_n>>\sqrt{n}$ the empirical spectral measure converges almost surely to the semicircle law, while if $\frac{p_n}{\sqrt{n}}\xrightarrow{n\to\infty}\lambda\in(0,\infty)$, the empirical spectral measure converges almost surely to the limiting measure described in \cite{ES}.

\end{enumerate}

\section*{Acknowledgements}
This research was supported by  grant DMS-1308725 from the National Science Foundation.  Some of the work was
carried out while the second-named author was visiting the Institut de
Math\'ematiques de Toulouse at the Universit\'e Paul Sabatier.


\bibliographystyle{plain}
\bibliography{qspins}

\end{document}